\def\eqalign#1{\null\,\vcenter{\openup\jot \mathsurround=0pt \ialign{\strut
     \hfil$\displaystyle{##}$&$ \displaystyle{{}##}$\hfil \crcr#1\crcr}}\,}
\newtheorem{proposition}{Proposition}
\newtheorem{corollary}[proposition]{Corollary}
\newtheorem{definition}[proposition]{Definition}
\def\tempcolor{\color{black}}   
\def\EBcolor{\color{black}}
\def\eqref#1{(\ref{#1})}
\def\tfrac#1#2{{\textstyle \frac{#1}{#2}}}
\def\R{\mathbb{R}}
\begin{document}

\title{Integrability properties of  Kahan's method }

\author{Elena Celledoni$^1$, Robert I McLachlan$^2$, David I McLaren$^3$, Brynjulf Owren$^1$ and G R W Quispel$^3$}

\address{$^1$ 	Department of Mathematical Sciences,
	NTNU,
	7491 Trondheim,
	Norway\eads{\mailto{elenac@math.ntnu.no}, \mailto{bryn@math.ntnu.no}}}
\address{$^2$ 	Institute of Fundamental Sciences,
	Massey University,
	Private Bag 11 222, Palmerston North 4442, New Zealand\ead{\mailto{r.mclachlan@massey.ac.nz}}}
\address{$^3$ 	Department of Mathematics,
	La Trobe University,
	Bundoora, VIC 3083, Australia\eads{\mailto{d.mclaren@latrobe.edu.au}, \mailto{r.quispel@latrobe.edu.au}}}

\begin{abstract}
\noindent
We present several novel examples of integrable quadratic vector fields
for which Kahan's {\tempcolor discretization} method preserves integrability. Our examples include
generalized Suslov and Ishii systems, Nambu systems, Riccati systems, and the first Painlev\'e
equation. We also discuss how Manin transformations arise in Kahan
discretizations of certain vector fields.
\end{abstract}
 
\section{Introduction}

This paper is about quadratic ordinary differential equations (ODEs) and their discretization. ODEs can be either integrable or non-integrable. In this paper, we concentrate on the case of integrable {\it quadratic} ODEs, and their discretization by Kahan's method (sometimes also called the Hirota--Kimura method); see \cite{CMOQgeometricKahan, hirota00,kahan,ka-li, kimura00dot, petrera12oio,pfadler} and references therein.
In \cite{CMOQgeometricKahan} it was shown that the {\em Kahan method} applied to the differential equation
\begin{equation*}\dot{x}=f(x), \quad x(0)=x_0,\end{equation*}
with quadratic vector field $f$ on $\R^n$,
coincides with the Runge--Kutta method
\begin{equation}
\label{eq:kahanrk}
\frac{x'-x}{h} = -\frac{1}{2}f(x) + 2 f\Big(\frac{x+x'}{2}\Big) - \frac{1}{2}f(x').
\end{equation}
As already noted by Kahan \cite{ka-li}, the Kahan method also coincides
with a certain so-called Rosenbrock method on quadratic vector fields, {\tempcolor which} reads as
\begin{equation}
\label{eq:rosenbrock}
\frac{x' - x}{h} = \Big(I - \frac{h}{2}f'(x)\Big)^{-1}f(x),
\end{equation}
or {\tempcolor equivalently}
\begin{equation}
\label{eq:rosenbrock2}
\frac{x' - x}{h} = \Big(I + \frac{h}{2}f'(x')\Big)^{-1}f(x').
\end{equation}
In \cite{CMOQgeometricKahan} it was also shown that when applied to systems {\tempcolor with a cubic Hamiltonian}
the Kahan method has a conserved quantity given by the modified Hamiltonian
\begin{equation}
\label{eq:Ht}
\widetilde H(x) := H(x) + \frac{1}{3}h \nabla H(x)^T \Big( I - \frac{1}{2}h f'(x)\Big)^{-1}f(x),
\end{equation}
and it preserves the measure 
\begin{equation*}
 \frac{dx_1\wedge\dots\wedge dx_n}{\det(I - \frac{1}{2} h f'(x))}.
\end{equation*}
This implies that when applied to quadratic Hamiltonian vector fields in $\mathbb{R}^2$ the Kahan method yields an integrable map.

In this paper, we will be {\tempcolor particularly} interested in the question whether the Kahan discretization of certain classes of integrable ODEs preserves integrability, i.e., whether the discrete systems obtained from these integrable ODEs are themselves again integrable. Because the Kahan map is rational, the integrability criterion we use in
this paper is that of degree growth (or, equivalently, that of algebraic entropy).
This criterion states that if the maximal degrees of the numerator and
denominator of the iterates of a rational map exhibit polynomial growth
(rather than the generic exponential growth), the map is conjectured to be
integrable \cite{bellon1999algebraic,veselov1992growth}. The salient feature is that the degrees of numerator and
denominator here are calculated after cancellation of common factors, thus
implying that integrable maps undergo a magically large number of such
cancellations.
 (Some publications on integrable ordinary {\it difference} equations are \cite{QRT1, QRT2, QCPN,suris03tpo}).
In the current paper, we {\tempcolor give some new examples and significantly generalize some of} the examples in \cite{petrera12oio}. 
Our method is to first identify a candidate system or class of systems for which the Kahan map satisfies the algebraic entropy criterion, and then to attempt to prove the systems' integrability by finding a conserved measure and a sufficient number of first integrals.

A second feature of our study is that, in view of the fact that the Kahan method is affine-covariant \cite{mclachlan1998numerical}, we attempt to express all results in a linear- or affine-covariant way and to work with linear- or affine-invariant classes of systems where possible.


Our outline is as follows. Section~\ref{section2} is on the application of Kahan's method to {\tempcolor \it{autonomous}} quadratic ODEs. Subsection~\ref{subsectionManin}  
treats the discretization of quadratic Hamiltonian ODEs in $\mathbb{R}^2$, and its connection to Manin transformations \cite{duistermaat10dis}. Subsection~\ref{subs:nambu} discusses the discretization of quadratic ODEs in $\mathbb{R}^3$ possessing two quadratic integrals. Subsection~\ref{subs:suslov} is about the discretization of generalized Suslov sustems in $\mathbb{R}^n$, including the integrable cases $n=2$ and $n=3$ \cite{petrera12oio,suslov46tm}. {\tempcolor Section 3 is on the application of Kahan's method to {\it nonautonomous} quadratic ODEs. Subsection 3.1 treats the discretization of certain Riccati equations.}
Last but not least, Section~\ref{subs:painleve} covers the (non-autonomous) first Painlev\'e equation \cite{gromak08pde}, whose Kahan discretization is also shown to be integrable. In this striking example the integrals are not rational functions and the mapping is not integrable in terms of elliptic or hyperelliptic functions.

\section{Autonomous problems}
\label{section2}
\subsection{Manin transformations}
\label{sec:2.1}
\label{subsectionManin}
Let $H(x,y)$ be an arbitrary cubic Hamiltonian. In \cite{CMOQgeometricKahan} it was shown that Kahan's method applied to the ODE
\begin{equation}
\label{eq:1}
\frac{d}{dt}\left(\matrix{x\cr y}\right)=
\left(
\matrix{
0 & 1\cr
-1 & 0}\right) \nabla H(x,y)
\end{equation}
yields an integrable map of $\mathbb{R}^2$. The question thus arises whether this integrable map is new or is already known. Here we show for one asymmetric Hamiltonian as well as for all symmetric Hamiltonians ($H(x,y)=H(y,x)$) that the integrable map obtained  by applying Kahan's method to (\ref{eq:1}) is in fact a so-called Manin transformation as in \cite[Chapter 4.2]{duistermaat10dis}, a known class of integrable maps of the plane. We briefly recall its definition.
\begin{definition} \label{def1}
Consider an elliptic curve $\gamma$ in the projective plane with a fixed point $O\in \gamma$. The  involution
$I_O: \gamma\rightarrow \gamma$ is defined as follows: For $P\neq O$, $I_O(P)$  is the unique third point of intersection of the line through $O$ and $P$ with $\gamma$, counting with multiplicities.  
For $P=O$, $I_O(P)$ is the third intersection point of the line tangent to $\gamma$ at $O$ with $\gamma$, again counting with multiplicities
 \footnote{ Such a point is guaranteed to exist {\EBcolor and be unique} because the curve is elliptic.}.
\end{definition}
We next consider a pencil of elliptic curves, generally denoted $P_z(x,y)=z_0 P^{0}(x,y)+z_1 P^1(x,y) = 0$. A base point $B$
is a point for which both $P^0$ and $P^1$ vanish, and must clearly be common to all curves in the pencil since
$P_z(B)=0$ for all $z$. Suppose  now that $B$ is a base point and that $P$ is not a base point. Then the curve to which $P$ belongs is uniquely given. We can therefore extend the definition of $I_B$ above to the whole pencil, base points excluded. Also, compositions of such involutions are well defined in this sense.


\begin{definition} Let $B_1$ and $B_2$ be base points in the projective plane.
A
Manin transformation is the composition $I_{B_1} \circ I_{B_2}$, defined on some subset of a pencil of elliptic curves.
\end{definition}

The significance of Manin transformations is that \cite[Lemma 4.2.1]{duistermaat10dis} $I_{B_1}\circ I_{B_2}$ is the unique translation on an elliptic curve that maps $B_1$ to $B_2$. Thus a dynamical system $x_n\mapsto x_{n+1}$ defined by a Manin transformation has solution $x_n = x_0 + n (B_2 - B_1)$, where the operations are those of the abelian group operation on the elliptic curve.



{\EBcolor We start by presenting a non-symmetric example.}
Let
\begin{equation*} 
 H(x,y) = \tfrac13 (x^3 + y^3) + \tfrac12 y^2.
\end{equation*}
We denote by $\Phi_h$ the Kahan map applied to this problem. When Kahan's method is applied to the corresponding problem, the modified Hamiltonian is
\begin{equation} \label{eq:nonsymm}
\widetilde{H}(x,y; h) = \frac{\tfrac13(x^3+y^3)+(\frac12-\tfrac1{12}h^2x)y^2}{1+h^2x(y+\tfrac12)} .
\end{equation}
\begin{proposition}
The Kahan map $\Phi_h$ is a Manin transformation. It can be written as a composition of involutions
\begin{equation*}
      \Phi_h = I_{B_1}\circ I_{B_\infty}
\end{equation*}
where the base point $B_1$ is finite and $B_{\infty}$ is a point at infinity with
\begin{equation*}
       I_{B_\infty}: (x,y) \mapsto (x_c,y_c)
\end{equation*}
where
\begin{equation*}\eqalign{
       x_c &= \frac{\beta x r^2-12y(y+1)r-h^2(y^2+6(2y+1)\widetilde{H})}{\beta r^2-2h^2(y+6\widetilde{H})r-12x}, \cr
       y_c&=y-r(x_c-x),
}\end{equation*}
and $r=r(h)$ is the unique positive real root of the cubic equation $4r^3+r^2h^2-4=0$, $\beta=h^2x-12y-6$ and $\widetilde{H}=\widetilde{H}(x,y; h)$.
\end{proposition}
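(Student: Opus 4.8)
The plan is to exhibit the conserved quantity \eqref{eq:nonsymm} as a pencil of plane cubics, to locate its base points, and then to recognise $\Phi_h$ as the product of two of the associated involutions. Write $\widetilde{H}(x,y;h)=P^0(x,y)/P^1(x,y)$ with
\[
P^0 = \tfrac13(x^3+y^3) + (\tfrac12-\tfrac1{12}h^2x)y^2, \qquad P^1 = 1 + h^2x(y+\tfrac12).
\]
Since $\widetilde{H}\circ\Phi_h = \widetilde{H}$ — this is the result of \cite{CMOQgeometricKahan}, cf.\ \eqref{eq:Ht} — every level set $C_c=\{P^0-cP^1=0\}$ is $\Phi_h$-invariant, and these cubics constitute the pencil $P_z = z_0P^0+z_1P^1=0$. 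A first, routine step is to check that the generic member $C_c$ is a smooth plane cubic — its discriminant does not vanish identically in $c$ — so that the constructions of Definition~\ref{def1} apply to all but finitely many curves of the pencil.

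Next I would determine the base locus $\{P^0=P^1=0\}$, which by B\'ezout consists of six points of $\mathbb{CP}^2$ counted with multiplicity, each lying on every $C_c$. Homogenising with $(x,y)=(X/Z,Y/Z)$, the cubic $P^1$ acquires the line $Z=0$ as a component, and $P^0|_{Z=0}$ is proportional to $4X^3+4Y^3-h^2XY^2$; setting $Y=-rX$ shows this vanishes exactly when $4r^3+r^2h^2-4=0$, so one base point at infinity is the real point $B_\infty=[1:-r:0]$, with $r=r(h)$ the positive real root (the other two being a complex-conjugate pair). The three remaining base points are finite, and $B_1$ is the real one among them. To obtain the displayed formula for $I_{B_\infty}$, note that the line through $(x,y)$ and $B_\infty$ is $Y-y=-r(X-x)$; substituting this into the equation $P^0(X,Y)-\widetilde{H}(x,y)\,P^1(X,Y)=0$ of the curve through $(x,y)$, and using that the coefficient of $X^3$ vanishes because $B_\infty$ lies on that cubic, leaves a quadratic in $X$ whose two roots are $x$ and the abscissa $x_c$ of the third intersection point. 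Vieta's relation then gives $x_c$ as the stated rational function — with $\beta=h^2x-12y-6$ the normalised quadratic coefficient — and $y_c=y-r(x_c-x)$ since $(x_c,y_c)$ lies on the line. This part is a direct computation.

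It remains to identify $\Phi_h=I_{B_1}\circ I_{B_\infty}$. By \cite[Lemma 4.2.1]{duistermaat10dis}, on each elliptic member of the pencil $I_{B_1}\circ I_{B_\infty}$ is the unique translation carrying $B_1$ to $B_\infty$, so it suffices to recognise $\Phi_h$ restricted to a generic $C_c$ as that translation. That $\Phi_h|_{C_c}$ is a translation at all follows from \cite{CMOQgeometricKahan}: $\Phi_h$ preserves both $\widetilde{H}$ and the density $dx\wedge dy/\det(I-\tfrac12hf'(x))$, hence preserves the induced holomorphic one-form on $C_c$, and an automorphism of an elliptic curve fixing a holomorphic one-form is a translation. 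To pin it down I would study $\psi:=\Phi_h\circ I_{B_\infty}$: on each $C_c$ the involution $I_{B_\infty}$ acts as a reflection $P\mapsto\kappa_c-P$ in the group law, so $\psi$ acts as $P\mapsto(\kappa_c+v(c))-P$, again an involution — in particular $\psi^2=\mathrm{id}$ globally, which I would verify by a CAS-assisted rational-function computation from the explicit forms of $\Phi_h$ (via \eqref{eq:kahanrk} or \eqref{eq:rosenbrock}) and of $I_{B_\infty}$ obtained above. An involution of the pencil preserving each member and acting on it as a reflection is of the form $I_B$ with $B$ a point common to all members, i.e.\ a base point; exhibiting one fixed point of $\psi$ and checking that the chord joining two fixed points is independent of $c$ identifies that base point as $B_1$, whence $\Phi_h=\psi\circ I_{B_\infty}=I_{B_1}\circ I_{B_\infty}$. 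The main obstacle is precisely this last step — controlling $\psi$ well enough to see it is the involution through a base point rather than an arbitrary translation (indeed only translations by a difference of base points are Manin transformations), together with the cancellations needed to confirm $\psi^2=\mathrm{id}$; the B\'ezout bookkeeping and the Vieta step for $I_{B_\infty}$ are comparatively mechanical.
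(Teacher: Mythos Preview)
Your route is different from the paper's, and your final identification step is not quite right as written.

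The paper bypasses the abstract translation/involution analysis entirely. After locating $B_\infty=[1:-r:0]$ and the finite real base point $B_1=(r/h,s/h)$ with $s=-(1+\tfrac12hr)/r$, it defines $(x_c,y_c)$ as the intersection of the two \emph{lines}
\[
y_c=y-r(x_c-x)\qquad\text{and}\qquad y_c=y'+\frac{hy'-s}{hx'-r}(x_c-x'),
\]
where $(x',y')=\Phi_h(x,y)$ is taken from the explicit Kahan formulas. One then checks symbolically that $\widetilde H(x_c,y_c)-\widetilde H(x,y)$ has $4r^3+h^2r^2-4$ as a factor of its numerator and hence vanishes. That single verification does everything: the point $(x_c,y_c)$ lies on $C_c$ and on the line through $(x,y)$ and $B_\infty$, so it equals $I_{B_\infty}(x,y)$; it also lies on $C_c$ and on the line through $(x',y')$ and $B_1$, so it equals $I_{B_1}(x',y')$. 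Hence $I_{B_\infty}(x,y)=I_{B_1}(\Phi_h(x,y))$, i.e.\ $\Phi_h=I_{B_1}\circ I_{B_\infty}$. No appeal to measure preservation, to the translation structure, or to $\psi^2=\mathrm{id}$ is needed.

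Your plan is conceptually coherent up to the point where you want to pin $\psi=\Phi_h\circ I_{B_\infty}$ to the specific involution $I_{B_1}$, but the mechanism you propose for that is off. The chord through two fixed points of $I_B$ on $C_c$ does \emph{not} pass through $B$: a fixed point $P$ of $I_B$ satisfies that the line $BP$ is tangent to $C_c$ at $P$, so it is the \emph{tangent lines} at fixed points that meet at $B$, not the secants. Moreover, the assertion that a birational involution of $\mathbb P^2$ preserving each member of the pencil and acting as a reflection on it must be $I_B$ for a base point $B$ is true but not free; you would need to invoke the structure of fibre-preserving automorphisms of the associated rational elliptic surface, which is heavier than anything the paper uses. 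If you want to salvage your approach, replace the chord argument by: compute two fixed points of $\psi$ on a generic $C_c$, intersect the tangents there, and verify the intersection is the base point $B_1$ independently of $c$. But note that the paper's line-intersection check is computationally much lighter than verifying $\psi^2=\mathrm{id}$ symbolically, and it yields the explicit $x_c$ formula in the statement at the same time.
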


\noindent \textbf{Remark.} We do not give the explicit expression for $I_{B_1}$ here; it can be obtained simply as $I_{B_1}=\Phi_h\circ I_{B_\infty}$.

\begin{proof}
 We obtain the base points as common zeros in the projective plane of  the numerator and denominator of $\widetilde{H}$. We can explicitly solve for 
$y$ in terms of $x$ from the denominator, substituting this into the numerator we find that the $x$-coordinate of the finite base points satisfy
$r=hx$ where $r$ is a root of
 \begin{equation} \label{altnumer}
( 4r^3+h^2r^2-4)(4r^3-h^2r^2+4)=0,
 \end{equation}
 and the $y$-coordinate satisfies $s=hy$ where
 \begin{equation*}
     s=-\frac{1+\tfrac12h r}{r}.
 \end{equation*}
Using Budan's theorem we find that for any $h$ there is precisely one real positive root of the factor  $4r^3+h^2r^2-4$; for ease of notation we denote this root by $r$ in what follows.
  The finite base point we use in the construction of the involution $I_{B_1}$ is $(r/h,s/h)$.
 
 Using homogeneous coordinates,  $y=Y/R$, $x=X/R$, we find for $R=0$ that 
 \begin{equation*}
    \tfrac13 (X^3+Y^3)-\tfrac1{12} h^2 XY^2 = 0.
 \end{equation*}
 Setting $z=Y/X$, we get
 \begin{equation*}
 4z^3- h^2 z^2+4 = 0
 \end{equation*}
 which therefore has precisely one negative real root, $z=-r$. This base point at infinity is used to construct $I_{B_\infty}$.
  \begin{figure}[t]
  \begin{center}
 \begin{tikzpicture}
	\node[anchor=south west, inner sep=0] (image) at (0,0) {\includegraphics[width=0.6\textwidth]{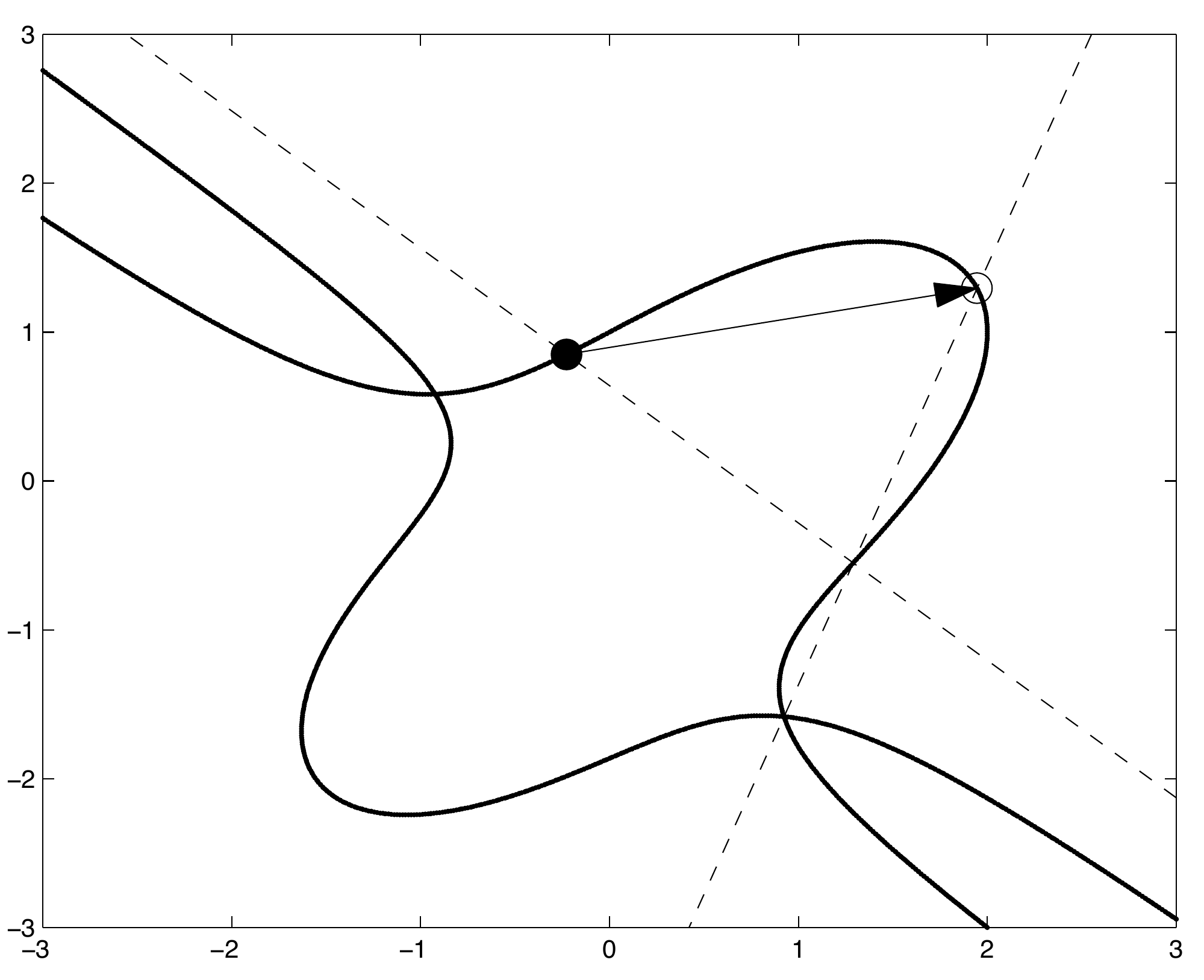}};
	\begin{scope}[x={(image.south east)},y={(image.north west)}]
	\node at (0.46, 0.69) {$P$};
	\fill [black] (0.707,0.425) circle (2pt);
	\node at (0.57,0.42) {$Q = I_{B_\infty}(P)$};
	\node at (0.85,0.8) {$\begin{array}{r}\Phi_h(P) = \\ I_{B_1}(Q)\end{array}$}; 
	\end{scope}
\end{tikzpicture}
	
\caption{Kahan mapping $\Phi_h$ associated with the Hamiltonian $H(x,y)=\frac{1}{3}(x^3+y^3) + \frac{1}{2}y^2$ at step size $h=1$. The mapping is represented by the arrow between the two points marked with circles. See text for the construction of the two involutions $I_{B_1}$ and $I_{B_\infty}$.} 
\label{fig:1}
\end{center}

\end{figure}

 The  construction is illustrated in Figure~\ref{fig:1} where two level curves of the non-symmetric modified Hamiltonian (\ref{eq:nonsymm}) with $h=1$
are plotted. These intersect each other in two finite base points. A third base point is at infinity. The dashed lines represent the two involutions $I_{B_1}$ and $I_{B_\infty}$,  and  $\Phi_h=I_{B_1}\circ I_{B_\infty}$. With reference to Definition~\ref{def1}, the source point $P$ of $I_{B_\infty}$ is the filled bullet, and the target $Q=I_{B_\infty}(P)$ is the intersection between the declining dashed line and the level curve. This is also the source point of $I_{B_1}$, whose target, $I_{B_1}(Q)$, is the intersection of the inclining dashed line with the level curve marked by an open bullet.
In this example $r\approx 0.923$.
 
We let $(x,y)$ and $h$ be fixed, but arbitrary. We now show that $\Phi_h=I_{B_1}\circ I_{B_\infty}$ and is therefore a Manin transformation.
 Compute $(x_c,y_c)$, the intersection between the two straight lines, by the formulas  
  \begin{eqnarray*}
 x' &= \frac{x+h(y^2+y)}{1+h^2(xy+\tfrac12 x)} & \mbox{(Kahan map\ $x$-coordinate),}\\[1mm]
 y' &= \frac{y-hx^2-\tfrac12h^2xy}{1+h^2(xy+\tfrac12x)} & \mbox{(Kahan map\  $y$-coordinate),} \\
 y_c &= y - r(x_c-x), \\[1mm]
 y_c &= y'+\frac{hy'-s}{hx'-r}(x_c-x'), \\[1mm]
 s&=-\frac{1+\tfrac12h r}{r}.
  \end{eqnarray*}
 Then $(x_c,y_c)$ depends on $r$ in addition to $x,y,h$. Now factor
$
      \widetilde{H}(x_c,y_c)-\widetilde{H}(x,y)
$
 and observe that this rational expression has a factor $   4r^3 + h^2 r^2 -4 $ in the numerator
 which is zero because of the definition of $r$. This proves that the intersection point $(x_c,y_c)$ lies on the same elliptic curve as $(x,y)$.
 \end{proof}
 
 \begin{figure}[t]
\begin{center}
\includegraphics[width=0.6\textwidth]{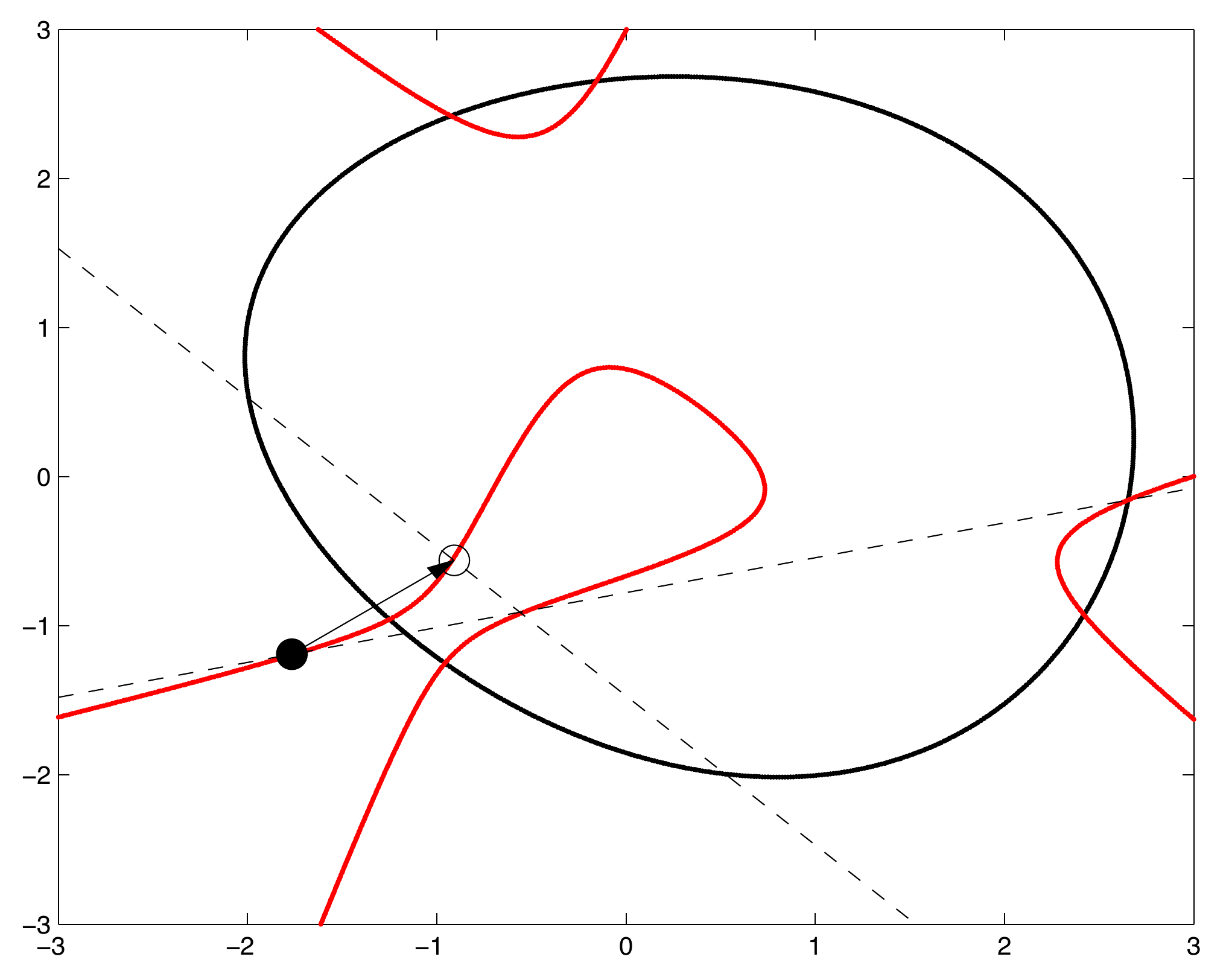}
 \caption{\label{fig:sym}A symmetric example of the construction of the Kahan mapping as a Manin transformation. The Hamiltonian is  (\ref{eq:2}) with $b=-0.0986$, $c=0.416$, $d=0.674$, $e=-0.428$, $g=0.134$, and the step size is$h=1$. The level sets $\widetilde H^{-1}(\infty)$ and $\widetilde H^{-1}(0.25)$ are shown in black and red, and intersect at base points.}
 \end{center}
 \end{figure}
 
We now proceed to show that for all {\it symmetric} Hamiltonians on $\mathbb{R}^2$, Kahan's method yields a Manin transformation. An example of the construction is shown in Figure \ref{fig:sym}.

\begin{proposition} \label{prop:symm}
The map $\Phi_h$ obtained by applying Kahan's method to the vector field \eqref{eq:1}
 with symmetric cubic Hamiltonian
 \begin{equation}
 \label{eq:2}
 H(x,y)=b(x^3+y^3)+c(x^2y+xy^2)+d(x^2+y^2)+exy+g(x+y)
 \end{equation}
 is given by
 \begin{equation}
 \label{eq:3}
 \Phi_h=I_{B_\infty}\circ I_{B_2},
 \end{equation}
 where
 \begin{equation}
 \label{eq:4}
 I_{B_\infty}\left(\matrix{x\cr y}
\right)=\left(\matrix{y\cr x}\right)
 \end{equation}
 and
 $I_{B_2}$ is the involution defined by the base point 
 \begin{equation}
 \label{eq:5}
 B_2=-\frac{1}{2h(3b-c)}\left(\matrix{ 2hd-he+2\cr 2hd-he-2} \right).
 \end{equation}
 \end{proposition}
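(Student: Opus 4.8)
The plan rests on the observation that the transposition $S:(x,y)\mapsto(y,x)$ is a reversing symmetry of both the ODE \eqref{eq:1} and its Kahan map. Because $H$ is symmetric we have $\nabla H\circ S=S\,\nabla H$, hence $f\circ S=-S\,f$ for $f=J\nabla H$ (using $JS=-SJ$, with $J$ the matrix in \eqref{eq:1}); and because \eqref{eq:kahanrk} is manifestly invariant under $(x,x',h)\mapsto(x',x,-h)$, the Kahan map satisfies $\Phi_{-h}=\Phi_h^{-1}$. Combining the two yields $S\circ\Phi_h\circ S=\Phi_h^{-1}$, so $R:=S\circ\Phi_h$ is an involution and $\Phi_h=S\circ R$. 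The plan is then to recognise $S$ as the Manin involution $I_{B_\infty}$ at the point at infinity $B_\infty=[1:-1:0]$, and $R$ as the Manin involution $I_{B_2}$ at the base point \eqref{eq:5}; this establishes \eqref{eq:3}.

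The first ingredient is the lemma that the modified Hamiltonian is itself symmetric, $\widetilde H\circ S=\widetilde H$. Substituting $Sx$ into \eqref{eq:Ht} and using $\nabla H(Sx)=S\,\nabla H(x)$, $f(Sx)=-S\,f(x)$ and $f'(Sx)=-S\,f'(x)\,S$, one reduces $\widetilde H(x)-\widetilde H(Sx)$ to $\tfrac{2h}{3}\,\nabla H^{T}\bigl(I-\tfrac{h^2}{4}(f')^2\bigr)^{-1}f$, all evaluated at $x$; writing $\nabla H=-Jf$ and $f'=JH''$ with $H''$ the symmetric Hessian of $H$ shows that $J\bigl(I-\tfrac{h^2}{4}(f')^2\bigr)^{-1}$ is skew-symmetric, so this quadratic form in $f$ vanishes identically. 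Consequently the numerator and denominator of $\widetilde H$ may be taken symmetric; since the level curves $C_z:=\widetilde H^{-1}(z)$ form a pencil of cubics \cite{CMOQgeometricKahan}, the leading cubic form of the numerator is a symmetric cubic form, hence divisible by $x+y$, hence vanishes at $[1:-1:0]$, so $B_\infty:=[1:-1:0]$ is a base point of the pencil. Moreover $S$, $\Phi_h$, and hence $R$, each map every $C_z$ to itself.

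Next I identify $S=I_{B_\infty}$: for $P=(a,b)\in C_z$, the line joining $P$ to $B_\infty$ is the affine line $x+y=a+b$, which also contains $S(P)=(b,a)$; since $B_\infty\in C_z$ and a line meets the cubic $C_z$ in three points counted with multiplicity, $I_{B_\infty}(P)=S(P)$, which is \eqref{eq:4}. It remains to prove $R=I_{B_2}$ for the base point $B_2$ of \eqref{eq:5}. As $R$ already preserves every $C_z$, it is enough to verify (i) that $B_2$ is a base point of the pencil — directly, that it annihilates numerator and denominator of $\widetilde H$, or equivalently that $B_2$ is the image $\Phi_h([1:-1:0])$ of $B_\infty$ under the projectively extended Kahan map, which is forced since the Manin transformation $I_{B_\infty}\circ I_{B_2}$ acts on each curve as the translation by $B_2-B_\infty$ and hence sends $B_\infty$ to $B_2$ — and (ii) that for every $P$ the points $P$, $R(P)=S(\Phi_h(P))$ and $B_2$ are collinear, equivalently (applying $S$) that $\Phi_h(P)$ lies on the line through $S(P)$ and $S(B_2)$. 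Granting (i) and (ii), the line $\overline{P\,B_2}$ cuts $C_z$ precisely in $P$, $B_2$ and $R(P)$, so $R(P)$ is the third intersection point, $R=I_{B_2}$, and $\Phi_h=I_{B_\infty}\circ I_{B_2}$.

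The only genuinely laborious step is (ii): it amounts to a single polynomial identity in $x,y$ and the parameters $b,c,d,e,g,h$, obtained by clearing the common denominator of the explicit Kahan map and expanding the $3\times3$ determinant expressing collinearity of $\Phi_h(P)$, $S(P)$ and $S(B_2)$. This is routine but bulky, and best checked with a computer-algebra system; as a sanity check, for $H=\tfrac13(x^3+y^3)$ the determinant collapses to $0$ after obvious cancellations and there $\Phi_h([1:-1:0])=(-1/h,1/h)$, in agreement with \eqref{eq:5}. One must also set aside the degenerate case $3b=c$, in which the cubic part of $H$ equals $b(x+y)^3$, the three base points at infinity coalesce at $B_\infty$, the pencil degenerates, and \eqref{eq:5} is undefined; this case requires separate (or no) treatment.
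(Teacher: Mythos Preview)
Your argument is correct and supplies structure that the paper omits entirely: the paper's proof is the single sentence ``The proof is obtained by direct computation.'' Both approaches ultimately defer to a symbolic check, but you isolate precisely what must be checked and explain everything else. The reversibility identity $S\Phi_hS=\Phi_h^{-1}$ gives the factorisation $\Phi_h=S\circ R$ with $R^2=\mathrm{id}$ for free; your proof that $\widetilde H\circ S=\widetilde H$ (which the paper states without proof in the Remarks following the proposition) then identifies $S$ with the Manin involution at $[1:-1:0]$; and you reduce the remaining identification $R=I_{B_2}$ to the single collinearity determinant (ii). The paper presumably verifies $\Phi_h=I_{B_\infty}\circ I_{B_2}$ component by component in a CAS, which is shorter to execute but hides why the swap appears and why such a decomposition should exist at all.

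One small point of logic: in (i), the clause ``which is forced since the Manin transformation \dots\ sends $B_\infty$ to $B_2$'' reads as circular, since it invokes the conclusion you are proving. The clean version of that branch is simply that $\Phi_h$ preserves every level curve $C_z$ and $B_\infty$ lies on all of them, hence so does $\Phi_h(B_\infty)$; one then computes $\Phi_h([1:-1:0])$ directly and finds $B_2$ (your sanity check already does this in the homogeneous case $H=\tfrac13(x^3+y^3)$). Your remark on the degenerate case $3b=c$ agrees with the paper's own Remark after the proposition.
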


\noindent \textbf{Remarks.} The involution $I_{B_\infty}$ corresponds to a base point at infinity, associated with the lines of slope $-1$. The fact that
$H(x,y)=H(y,x)$ implies that $\widetilde{H}(x,y;h)=\widetilde{H}(y,x;h)$ and thus all level curves of $\widetilde{H}$ must be symmetric with respect to the line $y=x$. This means that any two  points $(x,y), (y,x)$ lie on the same level curve, and are joined by a line of slope $-1$.
 
 For the special case $c=3b$, the Manin transformation becomes a composition of two base points at infinity. $I_{B_1}(x,y)$ is then obtained as the second intersection of the line through $(x,y)$ with slope
 \begin{equation*}
       \frac{2hd-he-2}{2hd-he+2}
 \end{equation*}
and the level curve of $\widetilde{H}$ on which $(x,y)$ lies.

 \begin{proof}
 The proof is obtained by direct computation.
   \end{proof}
 We have presented a non-symmetric example followed by the general case of symmetric Hamiltonians in 2 dimensions.
 Since Kahan's method is a Runge-Kutta method, it is covariant with respect to affine transformations \cite{mclachlan1998numerical}. If $H(u)$ is a symmetric Hamiltonian, we may use an affine change of variables $u=\varphi(v)=Av+b$ for a $2\times 2$ non-singular matrix $A$ and a 2-vector $b$.
 The corresponding Hamiltonian is $K(v)=H(\varphi(v))$. It is a well-known fact that if the same Runge-Kutta method is applied to each of these problems the corresponding approximations $\{u_n\}$, $\{v_n\}$ satisfy $u_n=\varphi(v_n),\ n>0$, if $u_0=\varphi(v_0)$. In particular, for Kahan's method, $\Phi_h$, we have established  that there are involutions $I_{B_\infty}$ and $I_{B_2}$ such that $u_{n+1}=I_{B_\infty}\circ I_{B_2}(u_n)$ so it now follows that
 \begin{equation*}
      v_{n+1} = \varphi(u_{n+1})=\varphi\circ I_{B_\infty}\circ\varphi^{-1}\circ\varphi\circ I_{B_2}\circ \varphi^{-1}(v_n)
      =\bar{I}_{B_\infty} \circ\bar{I}_{B_2}(v_n),
 \end{equation*}
 where the new maps $\bar{I}_{B_\infty}$ and $\bar{I}_{B_2}$ are again involutions of the same type as in Definition~\ref{def1}.
 Thus we can generalize the symmetric case of Proposition~\ref{prop:symm} to all Hamiltonians which are related to a symmetric one via an affine transformation. Unfortunately, this does not cover all the cases of non-symmetric Hamiltonians, and it can be proved that the non-symmetric example given in the beginning of this section cannot be transformed to a symmetric one in this way.

\subsection{Nambu systems}
\label{subs:nambu}

A (canonical) Nambu system in $\R^3$ is the ODE
\begin{equation}
\label{eq:nambu} \dot x = \nabla H_1(x)\times \nabla H_2(x)
\end{equation}
where $H_1$ and $H_2$ are functions on $\R^3$ \cite{nambu1973generalized,vaisman1999survey}. 
Nambu systems are completely integrable with first integrals $H_1$ and $H_2$ (the `Hamiltonians') and conserved measure $dx_1\wedge dx_2\wedge dx_3$. They can also be viewed as Poisson systems in either of the Poisson forms 
\begin{equation*} \dot x = \widehat{\nabla H_1(x)}\nabla H_2(x) = -\widehat{\nabla H_2(x)} \nabla H_1(x)\end{equation*}
where $\widehat{\ }\colon\R^3\to\R^{3\times 3}$ is given by
\begin{equation*}\hat v =\left(\matrix{0 & -v_3 & v_2 \cr v_3 & 0 & -v_1 \cr -v_2 & v_1 & 0 }\right).\end{equation*}

When $H_1$ and $H_2$ are quadratic, the Nambu system is quadratic and we can ask whether Kahan's method has conservative properties.

\def\adj{{\rm adj}}
\def\diag{{\rm diag}}
\def\tr{{\rm tr}}

For a square matrix $M$, let $\adj(M)$ be the adjugate of $M$, i.e., the transpose of the matrix of cofactors of $M$. If $M$ is invertible, $\adj(M) = \det(M) M^{-1}$.

\begin{proposition}
\label{prop:nambu}
Let $H_1=x^T A x$ and $H_2=x^T B x$ be homogeneous quadratics on $\R^3$. Let $C = A\, \adj(B) A$ and $H_3(x) = x^T C x$.
Then the Kahan method applied to the Nambu system
(\ref{eq:nambu}) has invariant measure
\begin{equation*} \frac{dx_1\wedge dx_2 \wedge dx_3}{(1 + 4 h^2 H_3(x))^2}\end{equation*}
and first integrals
\begin{equation*} \widetilde H_i(x) := \frac{H_i(x)}{1 + 4 h^2 H_3(x)},\quad i=1,2\end{equation*}
and hence is integrable.
Consequently the $h$-independent measure $dx_1\wedge dx_1\wedge dx_3/(H_1(x)H_2(x))$ and the $h$-independent function $H_1(x)/H_2(x)$ are also conserved. Note that the entries of $C$ are biquadratic functions of the entries of $A$ and $B$.
\end{proposition}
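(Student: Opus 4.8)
The plan is to reduce everything, via the affine-covariance of Kahan's method, to a diagonal normal form in which the claims become a finite symbolic verification. Take $A$ and $B$ symmetric (replacing each by its symmetric part if necessary), so that $\nabla H_1=2Ax$, $\nabla H_2=2Bx$, the Nambu vector field is the homogeneous quadratic $f(x)=4\,(Ax)\times(Bx)$, and $f'(x)=4\big(\widehat{Ax}\,B-\widehat{Bx}\,A\big)$. It is convenient to keep the Kahan map in two forms: the polarised form $x'-x=2h\big((Ax)\times(Bx')+(Ax')\times(Bx)\big)$, and the Rosenbrock form \eqref{eq:rosenbrock}, $x'=x+h\,(I-\tfrac h2 f'(x))^{-1}f(x)$, which exhibits $\Phi_h$ explicitly as a rational map. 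Under a linear change $x=Sv$ the data transform by congruence, $(A,B)\mapsto(S^{\mathrm T}AS,S^{\mathrm T}BS)$, whence $C\mapsto\det(S)^2\,S^{\mathrm T}CS$ and $H_3\mapsto\det(S)^2\,(H_3\circ S)$, while the Kahan map transforms by $v\mapsto S^{-1}\Phi_{\det(S)h}(Sv)$; consequently the whole assertion of the proposition for the transformed system at step $h$ is equivalent to the same assertion for the original system at step $\det(S)\,h$. Since the pairs with $B$ positive definite form a full-dimensional, hence Zariski dense, subset of all symmetric pairs, each such pair is congruent to $B=I$, $A=\diag(\lambda_1,\lambda_2,\lambda_3)$, and all the identities to be proved are polynomial in the entries of $A,B$, in $x$ and in $h$ after denominators are cleared, it suffices to prove the proposition in this normal form.

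In the normal form the system is the (generalized) Euler top $f=4\big((\lambda_2-\lambda_3)x_2x_3,\;(\lambda_3-\lambda_1)x_3x_1,\;(\lambda_1-\lambda_2)x_1x_2\big)$ with two essential parameters, and $H_1=\sum_i\lambda_ix_i^2$, $H_2=\sum_ix_i^2$, $H_3=\sum_i\lambda_i^2x_i^2$. For the first integrals I would start from the elementary identity $H_i(x')-H_i(x)=\tfrac12\nabla H_i(x+x')^{\mathrm T}(x'-x)$, valid because $A,B$ are symmetric; inserting the polarised Kahan relation and expanding scalar triple products collapses the left-hand side to a single determinant, e.g.\ $H_1(x')-H_1(x)=2h\,\det\big(Ax,\,Ax',\,B(x-x')\big)$ and $H_2(x')-H_2(x)=2h\,\det\big(Bx,\,Bx',\,A(x'-x)\big)$. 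What must then be shown, $\widetilde H_i(x')=\widetilde H_i(x)$, is after clearing denominators the polynomial identity
\[\big(H_i(x')-H_i(x)\big)\big(1+4h^2H_3(x)\big)=4h^2\,H_i(x)\big(H_3(x')-H_3(x)\big),\qquad i=1,2,\]
in which $x'$ is eliminated through the Kahan relation; in the diagonal normal form this is a finite computation, closely related to the known integrability of the Kahan discretization of the Euler top, and by the reduction above it then holds for all symmetric $A,B$, and for arbitrary presentations by symmetrisation.

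For the invariant measure, differentiating the implicit Kahan relation $(I-\tfrac h2 f'(x))(x'-x)=hf(x)$ gives $D\Phi_h(x)=(I-\tfrac h2 f'(x))^{-1}(I+\tfrac h2 f'(x'))$, hence $\det D\Phi_h(x)=\det(I+\tfrac h2 f'(x'))/\det(I-\tfrac h2 f'(x))$. Granting the first integrals, conservation of $\widetilde H_1$ and $\widetilde H_2$ forces $(1+4h^2H_3(x'))/(1+4h^2H_3(x))=H_1(x')/H_1(x)=H_2(x')/H_2(x)$, so invariance of $dx_1\wedge dx_2\wedge dx_3/(1+4h^2H_3)^2$ is equivalent to invariance of the $h$-independent measure $dx_1\wedge dx_2\wedge dx_3/(H_1H_2)$, and either reduces to the polynomial identity $\det(I+\tfrac h2 f'(x'))\,H_1(x)H_2(x)=\det(I-\tfrac h2 f'(x))\,H_1(x')H_2(x')$, again checked in the diagonal normal form, where $\det(I-\tfrac h2 f'(x))$ is the explicit cubic $1+4h^2\sum_i\big(\prod_{j\ne i}(\lambda_j-\lambda_i)\big)x_i^2-16h^3(\lambda_1-\lambda_2)(\lambda_2-\lambda_3)(\lambda_3-\lambda_1)\,x_1x_2x_3$. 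The remaining assertions are then immediate: $H_1/H_2=\widetilde H_1/\widetilde H_2$ is a quotient of first integrals and hence a first integral; $dx_1\wedge dx_2\wedge dx_3/(H_1H_2)$ is a conserved measure times a conserved function; a rational map of $\R^3$ with two functionally independent first integrals and a conserved measure is integrable in the sense used here (the generic common level sets of $\widetilde H_1$ and $H_1/H_2$ are curves cut out as intersections of two quadrics, on which the dynamics linearises); and the entries of $C=A\,\adj(B)\,A$ are of bidegree $(2,2)$ in the entries of $A$ and of $B$, since those of $\adj(B)$ are the $2\times2$ minors of $B$.

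I expect the only real obstacle to be the sheer size of the core identity $\widetilde H_i\circ\Phi_h=\widetilde H_i$ once all denominators are cleared --- a high-degree polynomial identity in $x$ and $h$ --- so the real content lies in the reductions that make it checkable: the passage to the two-parameter diagonal normal form, the bookkeeping of the determinant scalings ($f\mapsto\det(S)f$, $H_3\mapsto\det(S)^2(H_3\circ S)$) that makes the affine-covariance argument close up, and the Zariski-density step that removes the positive-definiteness hypothesis. A route avoiding normal forms --- expanding directly in $\R^3$, using the adjugate identity $B\,\adj(B)=\det(B)\,I$ and triple-product manipulations of $(Ax)\times(Bx')+(Ax')\times(Bx)$ --- is available but appears considerably more laborious.
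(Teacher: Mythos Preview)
Your argument is correct, and your route is genuinely different from the paper's. The paper's proof is a single sentence of brute force: it substitutes the Rosenbrock form \eqref{eq:rosenbrock} of $x'$ directly into $\widetilde H_i(x')$ and into the Jacobian determinant formula $\det D\Phi_h=\det(I+\tfrac h2 f'(x'))/\det(I-\tfrac h2 f'(x))$, keeping all twelve entries of $A$ and $B$ symbolic, and asserts that the identities hold by direct computation (i.e.\ computer algebra). There is no reduction step, no normal form, and no intermediate identity such as your $H_i(x')-H_i(x)=2h\det(Ax,Ax',B(x-x'))$.

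What you do instead is exploit the affine covariance of the Kahan map together with the congruence transformation law $C\mapsto\det(S)^2\,S^{\mathrm T}CS$ (which you track correctly, including the rescaling $h\mapsto\det(S)h$) to reduce the verification to the simultaneously diagonal case $B=I$, $A=\diag(\lambda_1,\lambda_2,\lambda_3)$; you then invoke Zariski density of the positive-definite $B$'s to pass from this three-parameter family to general symmetric pairs. Your intermediate determinant identities and the explicit form of $\det(I-\tfrac h2 f'(x))$ in the normal form are all correct. The payoff is that the residual symbolic check is essentially the (well-known) integrability of the Hirota--Kimura discretization of the Euler top, rather than a twelve-parameter identity; the cost is the extra bookkeeping of the covariance/density argument. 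Both approaches leave the final equality to a finite symbolic verification, so neither is more ``complete'' than the other, but yours explains \emph{why} the general case follows from the simplest one, whereas the paper's simply asserts the general identity.
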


\begin{proof}
Using \eqref{eq:rosenbrock}, 
a direct substitution shows that $\widetilde H_i(x')=\widetilde H_i(x)$. The Jacobian derivative $D := \frac{\partial x'}{\partial x}$ of the Kahan mapping can be written
\begin{equation*} D  = \left(I -\frac{h}{2} f'(x)\right)^{-1}\left(I + \frac{h}{2} f'(x') \right).
\end{equation*}
Substituting $x'$ from (\ref{eq:rosenbrock}) into $D$ and using the explicit form of $f'(x)$ shows that
\begin{equation*} \det D= \frac{(1 + 4 h^2 H_3(x'))^2}{(1+ 4 h^2 H_3(x))^2}\end{equation*}
as required.
\end{proof}

Under linear changes of variables $y = L x$, Nambu's equations transform to
\begin{equation*} \dot y = L L^T (\nabla \overline{H_1}(y)\times \nabla\overline{H_2}(y))\end{equation*}
where $\overline{H_i}(y) = H_i(L^{-1}(y)) $.
Thus the form of Nambu's equations is invariant under orthogonal linear maps. Using such maps one or other of $A$, $B$ (but not, in general, both) can be diagonalized. Some special cases of note are
\begin{enumerate}
\item When $A = I$, $H_1 = \|x\|^2$, we have $C = \adj(B)$.
\item When $B=I$, $H_2 = \|x\|^2$, we have $C = A^2$.
\item When $A$ and $B$ are simultaneously diagonalizable, say $A=\diag(a_1,a_2,a_3)$ and $B=\diag(b_1,b_2,b_3)$, we have
\begin{equation*} C = \diag(a_1^2 b_2 b_3, a_2^2 b_1 b_3, a_3^2 b_1 b_2).\end{equation*}
This generalizes the 3-parameter example of \cite[Eq. (6.1)]{petrera12oio}, which includes the Euler free rigid body (see also \cite{petrera10,petrera12,petrera13}) and the Lagrange or Nahm system $\dot x_1 = x_2 x_3$, $\dot x_2 = x_3 x_1$, $\dot x_3 = x_1 x_2$ \cite{takhtajan1994foundation} to a linearly-invariant, 11-parameter family of quadratic vector fields.
\end{enumerate}
It is striking that Nambu's equations are essentially invariant under interchanging $A$ and $B$, but the formula $C=A\,\adj(B) A$ is not. However, it does define a striking product of quadratic forms.
When $A$ and $B$ lie in the symmetric space of symmetric positive definite matrices, it gives the image of $A$ under the geodesic symmetry at $B$. We do not develop this connection here. However, 
 the expressions for the conserved quantities can be written in a more invariant way as follows.

\begin{proposition}

Let $H_1=x^T A x$ and $H_2=x^T B x$ be homogeneous quadratics on $\R^3$. Let 
\begin{equation*}C = \frac{1}{2}\left(A\, \adj(B) A + B\, \adj(A) B\right) + \alpha(A\, \adj(B) A - B\, \adj(A) B) + \beta A + \gamma B\end{equation*}
and let $H_4(x) = x^T C x$.
Then for any constants $\alpha$, $\beta$, $\gamma$, 
the Kahan method applied to the Nambu system
(\ref{eq:nambu}) preserves the measure
\begin{equation*} \frac{dx_1\wedge dx_2 \wedge dx_3}{(1 + 4 h^2 H_4(x))^2}.\end{equation*}
\end{proposition}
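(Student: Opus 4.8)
The plan is to deduce this from Proposition~\ref{prop:nambu} applied twice --- once to the ordered pair $(H_1,H_2)$ and once to $(H_2,H_1)$ --- together with the elementary fact that the product of a conserved measure with a first integral is again a conserved measure. Throughout, write $H_3(x)=x^TA\,\adj(B)A\,x$ and $H_3'(x)=x^TB\,\adj(A)B\,x$, put $P=1+4h^2H_3$ and $Q=1+4h^2H_3'$, and let $\Phi_h$ be the Kahan map of the Nambu system $\dot x=\nabla H_1\times\nabla H_2$. If $A=0$ or $B=0$ (taking $A,B$ symmetric as usual) then $\nabla H_1\times\nabla H_2\equiv0$, so $\Phi_h$ is the identity and there is nothing to prove; hence we may assume $A,B\neq0$, and in particular $H_1\not\equiv0$.

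First I would produce a second conserved measure for $\Phi_h$. Since the cross product is anticommutative, the Nambu system for the ordered pair $(H_2,H_1)$ is $\dot x=-(\nabla H_1\times\nabla H_2)$; and replacing $f$ by $-f$ together with $h$ by $-h$ leaves \eqref{eq:kahanrk} unchanged, so the Kahan map of this second system at step $h$ equals $\Phi_{-h}$. Applying Proposition~\ref{prop:nambu} to the pair $(H_2,H_1)$ --- for which the matrix called $C$ there is $B\,\adj(A)B$ --- shows that $\Phi_{-h}$ conserves $dx_1\wedge dx_2\wedge dx_3/Q^2$ and has first integrals $H_1/Q$ and $H_2/Q$; as $Q$ depends on $h$ only through $h^2$, the same holds for $\Phi_h$. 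From Proposition~\ref{prop:nambu} applied directly to $(H_1,H_2)$ we already have that $\Phi_h$ conserves $dx_1\wedge dx_2\wedge dx_3/P^2$ with first integrals $H_1/P$ and $H_2/P$.

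Next I would verify that $(1+4h^2H_4)/P$ is a first integral of $\Phi_h$. Expanding the definition of $C$ and using that the coefficients $\tfrac12+\alpha$ and $\tfrac12-\alpha$ of $A\,\adj(B)A$ and $B\,\adj(A)B$ sum to $1$, one gets the identity
\begin{equation*}
1+4h^2H_4=\left(\tfrac12+\alpha\right)P+\left(\tfrac12-\alpha\right)Q+4h^2\beta\,H_1+4h^2\gamma\,H_2 ,
\end{equation*}
and hence
\begin{equation*}
\frac{1+4h^2H_4}{P}=\left(\tfrac12+\alpha\right)+\left(\tfrac12-\alpha\right)\frac{Q}{P}+4h^2\beta\,\frac{H_1}{P}+4h^2\gamma\,\frac{H_2}{P}.
\end{equation*}
Here $H_1/P$ and $H_2/P$ are first integrals, and $Q/P=(H_1/P)/(H_1/Q)$ is a ratio of first integrals, hence itself one, since the first integrals of a map form a field under the rational operations. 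Thus the right-hand side is a constant-coefficient combination of first integrals, so it is a first integral; therefore so is its reciprocal $P/(1+4h^2H_4)$, and so is its square.

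Finally, multiplying the conserved measure $dx_1\wedge dx_2\wedge dx_3/P^2$ of $\Phi_h$ by the conserved function $\big(P/(1+4h^2H_4)\big)^2$ gives that
\begin{equation*}
\frac{dx_1\wedge dx_2\wedge dx_3}{(1+4h^2H_4)^2}=\left(\frac{P}{1+4h^2H_4}\right)^2\frac{dx_1\wedge dx_2\wedge dx_3}{P^2}
\end{equation*}
is conserved by $\Phi_h$, which is the claim. The only genuine calculation is the verification of the first displayed identity, which is short and explicit; the step most in need of care is the harmless degenerate case $A=0$ or $B=0$. The conceptual content is that the $A\leftrightarrow B$ time-reversal symmetry of the Nambu field supplies the extra invariant measure $dx/Q^2$; the parameter $\alpha$ then interpolates between the two basic choices $C=A\,\adj(B)A$ and $C=B\,\adj(A)B$, while $\beta$ and $\gamma$ add the ``trivial'' first integrals $H_1$ and $H_2$. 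I do not expect any serious obstacle.
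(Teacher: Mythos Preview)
Your proof is correct, but it takes a genuinely different route from the paper's. The paper does \emph{not} apply Proposition~\ref{prop:nambu} a second time to the swapped pair $(H_2,H_1)$. Instead it stays with the single invariant measure $dx/P^2$ and the integrals $\widetilde H_1=H_1/P$, $\widetilde H_2=H_2/P$, and multiplies by $g(\widetilde H_1,\widetilde H_2)=(1+\alpha'\widetilde H_1+\beta'\widetilde H_2)^{-2}$ to obtain measures of the form $dx/(1+4h^2H_3+\alpha'H_1+\beta'H_2)^2$. The crucial extra ingredient is the $3\times3$ matrix identity
\[
A\,\adj(B)\,A - B\,\adj(A)\,B \;=\; \tr(A\,\adj B)\,A - \tr(B\,\adj A)\,B,
\]
which shows that the $\alpha$-term in $C$ is actually a linear combination of $A$ and $B$; hence $H_4=H_3+c_1H_1+c_2H_2$ for suitable constants, and the desired denominator already has the form produced above. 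Your approach sidesteps this identity entirely by invoking the time-reversal symmetry $f\mapsto-f$, $h\mapsto-h$ to harvest $dx/Q^2$ and $H_i/Q$ as a second family of invariants, then building $Q/P$ as a ratio of first integrals. What you gain is a cleaner, more symmetric argument that avoids a bare matrix computation; what the paper's route buys is the structural insight that the three-parameter family of $C$'s is in fact only two-dimensional modulo $A$ and $B$, so the $\alpha$-deformation is illusory.
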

\begin{proof}
First, if $\mu$ is an invariant measure, then so is $g(\widetilde H_1,\widetilde H_2)\mu$. Take $g(a,b)=(1+\alpha a + \beta b)^{-2}$ to get the invariant measure
\begin{equation}
\label{eq:mes2}
\eqalign{
 \frac{dx_1\wedge dx_2\wedge dx_3}  {(1 + 4 h^2 H_3)^2}& \left(1 + \frac{\alpha H_1}{1 + 4 h^2 H_3} + \frac{\beta H_2}{1 +4 h^2 H_3}\right)^{-2} \cr
&= \frac{dx_1\wedge dx_2\wedge dx_3}{(1 + 4 h^2 H_3)^2} 
\left(\frac{1 + 4 h^2 H_3 + \alpha H_1 + \beta H_2}{1 + 4 h^2 H_3}\right)^{-2} \cr
&= \frac{dx_1\wedge dx_2\wedge dx_3}{(1 + 4 h^2 H_3 + \alpha H_1 + \beta H_2)^2} \cr
}
\end{equation}
Next, we have that for any $3\times 3$ matrices $A$ and $B$, 
\begin{equation}
\label{eq:adjid}
 A\,\adj(B) A - B\,\adj(A) B = \tr(A\, \adj B)A - \tr(B\, \adj A)B,
 \end{equation}
which can be shown by multiplying out both sides.
Replacing $A\, \adj(B) A$ by $\frac{1}{2}A\, \adj(B) A+\frac{1}{2}A\, \adj(B) A$ and applying (\ref{eq:adjid}) and (\ref{eq:mes2}) gives the result.
\end{proof}

The Nambu systems in Prop. \ref{prop:nambu} are all 3-dimensional Lie--Poisson systems. There are 9 inequivalent families of real irreducible 3-dimensional Lie algebras \cite{patera76}. Five of them have homogeneous quadratic Casimirs and are covered by Prop. \ref{prop:nambu}: in the notation of \cite{patera76}, they are 
$A_{3,1}$ ($C=x_1^2$, Heisenberg Lie algebra)
$A_{3,4}$ ($C=x_1 x_2$, $\mathfrak{e}(1,1)$); 
$A_{3,6}$ ($C=x_1^2+x_2^2$, $\mathfrak{e}(2)$); 
$A_{3,8}$ ($C=x_2^2 + x_1 x_3$, $\mathfrak{su}(1,1)$, $\mathfrak{sl}(2)$; and
$A_{3,9}$ ($C=x_1^2+x_2^2+x_3^2$, $\mathfrak{su}(2)$, $\mathfrak{so}(3)$.
There are two other 3-dimensional Lie algebras for which Kahan's method applied to the associated Lie--Poisson system with (even nonhomogeneous) quadratic Hamiltonian is integrable. The first is $\mathfrak{a}(1)\times\R$, with Poisson tensor
\begin{equation*}\left(\matrix{0 & x_2 & 0 \cr -x_2 & 0 & 0 \cr 0 & 0 & 0 }\right).\end{equation*}
The planes $x_3=$ const. are invariant, and on each plane the system reduces to the 2-dimensional system 
\begin{equation}
\label{eq:2dlp}
\dot x = x_1 \left(\matrix{0 & 1 \cr -1 & 0 }\right)\nabla H
\end{equation}
 where $H$ is a nonhomogeneous quadratic. This is a 2-dimensional Suslov system; the integrability of the Kahan method applied to (\ref{eq:2dlp}) is established in Prop. \ref{prop:suslov2}.

The second is $A_{3,3}$, with Casimir $x_2/x_1$ and Poisson tensor
\begin{equation*}\left(\matrix{0 & 0 & x_1 \cr 0 & 0 & x_2 \cr -x_1 & -x_2 & 0 }\right).\end{equation*}
As the symplectic leaves are half-planes $x_2/x_1=$ const. and the points of the $x_3$-axis, these are
preserved by Kahan's method; on each half-plane, the equations of motion reduce to the 2-dimensional Suslov system (\ref{eq:2dlp}) whose Kahan map is integrable.


A related study is undertaken in \cite{hone09tds}, in which the Kahan mapping for several Nambu systems is integrated explicitly in terms of elliptic and transcendental functions.

\subsection{Suslov system}
\label{subs:suslov}
Consider the system
\begin{equation}
\label{eq:linSuslov}
\dot{x}=\ell (x)\, J\, \nabla H
\end{equation}
where $x\in \mathbb{R}^m$, $\ell(x) $ is a linear homogeneous polynomial, $J$ is a constant skew-symmetric $m\times m$ matrix and $H\colon\mathbb{R}^m\to\mathbb{R}$ is a quadratic homogeneous polynomial.

The Jacobian of the vector field $f(x)=\ell (x)\, J\, \nabla H$ is
\begin{equation*}f'(x)=J\,\nabla H \,\nabla\,  \ell (x)^T+\ell(x)J\, H''.\end{equation*} 

\begin{proposition}
Kahan's method applied to (\ref{eq:linSuslov}) 
preserves the measure
\begin{equation}
\label{eq:Suslovmeasure}
\frac{dx_1\wedge dx_2\dots \wedge dx_N}{\mathrm{det}(I-\frac{h}{2}\ell(x)JH'')\ell(x)}.
\end{equation}
\end{proposition}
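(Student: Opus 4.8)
The plan is to check invariance of the measure \eqref{eq:Suslovmeasure} by computing the Jacobian determinant of the Kahan map and matching it against the reciprocal of the prescribed density. Write $\Phi_h$ for the Kahan map, $x'=\Phi_h(x)$, and recall (as in the proof of Proposition~\ref{prop:nambu}) that its Jacobian is
\[
 D=\frac{\partial x'}{\partial x}=\Big(I-\tfrac h2 f'(x)\Big)^{-1}\Big(I+\tfrac h2 f'(x')\Big).
\]
Setting $\rho(x):=\det\big(I-\tfrac h2\ell(x)JH''\big)\,\ell(x)$, invariance of \eqref{eq:Suslovmeasure} is equivalent to $\det D=\rho(x')/\rho(x)$. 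The structural fact that makes this tractable is that, with $a:=\nabla\ell$ a constant vector (since $\ell$ is linear homogeneous, so $\ell(x)=a^Tx$ and $\nabla H=H''x$), the Jacobian $f'(x)=\ell(x)JH''+(JH''x)a^T$ is a rank-one perturbation of $\ell(x)JH''$, which commutes with $JH''$.

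First I would dispose of the two determinants in $\det D$ via the matrix determinant lemma. Since $I-\tfrac h2 f'(x)=\big(I-\tfrac h2\ell(x)JH''\big)-\tfrac h2(JH''x)a^T$,
\[
 \det\Big(I-\tfrac h2 f'(x)\Big)=\det\Big(I-\tfrac h2\ell(x)JH''\Big)\big(1-\tfrac h2\mu(x)\big),\qquad
 \mu(x):=a^T\Big(I-\tfrac h2\ell(x)JH''\Big)^{-1}JH''x,
\]
and likewise $\det\big(I+\tfrac h2 f'(x')\big)=\det\big(I+\tfrac h2\ell(x')JH''\big)\big(1+\tfrac h2\mu^-(x')\big)$ with $\mu^-(x'):=a^T\big(I+\tfrac h2\ell(x')JH''\big)^{-1}JH''x'$. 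I would also record the identity $\det(I+tJH'')=\det(I-tJH'')$ for every scalar $t$: transposing and using $J^T=-J$ turns $I+tJH''$ into $I-tH''J$, and $\det(I-tH''J)=\det(I-tJH'')$ since $\det(I-tXY)=\det(I-tYX)$. Hence $\det\big(I\pm\tfrac h2\ell\,JH''\big)$ both equal $\det\big(I-\tfrac h2\ell\,JH''\big)$.

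The heart of the proof is the transformation law of $\ell$ under $\Phi_h$. From $x'=x+h\big(I-\tfrac h2 f'(x)\big)^{-1}f(x)$ with $f(x)=\ell(x)JH''x$, I would apply the Sherman--Morrison formula to $\big(I-\tfrac h2 f'(x)\big)^{-1}$ and simplify $a^Tx'$; using repeatedly that $a^T\big(I-\tfrac h2\ell(x)JH''\big)^{-1}\big(\tfrac h2 JH''x\big)=\tfrac h2\mu(x)$, the corrections telescope (via $1-\tfrac h2\mu(x)+\tfrac h2\mu(x)=1$) and one is left with the closed Möbius-type formula
\[
 \ell(x')=\ell(x)\,\frac{1+\tfrac h2\mu(x)}{1-\tfrac h2\mu(x)}.
\]
Applying this identical computation to $\Phi_{-h}=\Phi_h^{-1}$ (the field keeps the form $\ell\,J\nabla H$ with the same $\ell,J,H$, only $h\mapsto-h$) at the point $x'$ gives $\ell(x)=\ell(x')\,\bigl(1-\tfrac h2\mu^-(x')\bigr)/\bigl(1+\tfrac h2\mu^-(x')\bigr)$; comparing with the previous display and using injectivity of $t\mapsto\frac{1+t}{1-t}$ forces $\mu^-(x')=\mu(x)$. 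Substituting everything into $\det D$,
\[
 \det D=\frac{\det\big(I-\tfrac h2\ell(x')JH''\big)\big(1+\tfrac h2\mu(x)\big)}{\det\big(I-\tfrac h2\ell(x)JH''\big)\big(1-\tfrac h2\mu(x)\big)}
 =\frac{\det\big(I-\tfrac h2\ell(x')JH''\big)\,\ell(x')}{\det\big(I-\tfrac h2\ell(x)JH''\big)\,\ell(x)}=\frac{\rho(x')}{\rho(x)},
\]
as required.

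I expect the main obstacle to be the Sherman--Morrison simplification that produces the closed formula for $\ell(x')$: one must watch the apparently messy corrections collapse using only the scalar identity above, and this is exactly where the special structure $f=\ell\,J\nabla H$ (a scalar times a Hamiltonian field) is really used. A secondary point needing care is the time-reversal step, which relies on the standard symmetry $\Phi_h^{-1}=\Phi_{-h}$ of Kahan's method --- already implicit in the equivalence of \eqref{eq:rosenbrock} and \eqref{eq:rosenbrock2} --- and on the fact that the transformation law for $\ell$ applies verbatim with $h$ replaced by $-h$.
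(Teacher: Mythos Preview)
Your proof is correct and follows essentially the same line as the paper's: both compute $\det D$ from the factored Jacobian, split $f'(x)=\ell(x)JH''+(J\nabla H)\,a^T$ as a rank-one perturbation, apply the matrix determinant lemma, and use $\det(I+tJH'')=\det(I-tJH'')$. The only organizational difference is in how the scalar factors are reduced to $\ell(x')/\ell(x)$: the paper applies the two Rosenbrock forms \eqref{eq:rosenbrock}, \eqref{eq:rosenbrock2} directly to express $u^Tv(x')$ and $u^Tw(x)$ in terms of $u^T(x'-x)$ and then invokes the homogeneity $\ell(x)=u^Tx$, whereas you first extract the closed M\"obius law $\ell(x')=\ell(x)\frac{1+\tfrac h2\mu(x)}{1-\tfrac h2\mu(x)}$ via Sherman--Morrison and then use the symmetry $\Phi_h^{-1}=\Phi_{-h}$ to deduce $\mu^-(x')=\mu(x)$. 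These are equivalent repackagings of the same computation; your version has the minor advantage of isolating a clean transformation law for $\ell$ under the Kahan map.
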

\begin{proof}
Differentiating the Kahan map and taking determinants gives 
\begin{equation*}\mathrm{det}\frac{\partial x'}{\partial x}=\frac{\mathrm{det}(I+\frac{h}{2}f'(x'))}{\mathrm{det}(I-\frac{h}{2}f'(x))}=\frac{\mathrm{det}(I+\frac{h}{2}J\nabla H(x') \nabla\,  \ell (x')^T+\frac{h}{2}\ell(x')\, JH'')}{\mathrm{det}(I-\frac{h}{2}J\nabla H \nabla\,  \ell (x)^T-\frac{h}{2}\ell(x)\, JH'')}.\end{equation*}
Factorizing the term $\mathrm{det}(I+\frac{h}{2}\ell(x')\, JH'')$ in the numerator and the term $\mathrm{det}(I-\frac{h}{2}\ell(x')\, JH'')$ in the denominator (these two factors coincide because of the properties of $J$ and $H''$) gives
\begin{equation*}\mathrm{det}\frac{\partial x'}{\partial x}=\frac{\mathrm{det}(I+\frac{h}{2}\ell(x')\, JH'')\mathrm{det}(I+\frac{h}{2\ell(x')}v(x')u^T)}{\mathrm{det}(I-\frac{h}{2}\ell(x')\, JH'')\mathrm{det}(I-\frac{h}{2\ell(x)}w(x)u^T)}\end{equation*}
where 
\begin{equation*}\eqalign{v(x')&=\ell(x')(I+\frac{h}{2}\ell(x')\, JH'')^{-1}J\nabla H(x'),\cr
w(x)&=\ell(x)(I-\frac{h}{2}\ell(x)\, JH'')^{-1}J\nabla H(x),
}\end{equation*}
and $u=\nabla \ell.$
By the properties of rank-one perturbations of the identity, we have
\begin{equation*}\mathrm{det}\frac{\partial x'}{\partial x}=\frac{\mathrm{det}(I+\frac{h}{2}\ell(x')\, JH'')(1+\frac{h}{2\ell(x')}u^Tv(x'))}{\mathrm{det}(I-\frac{h}{2}\ell(x)\, JH'')(1-\frac{h}{2\ell(x)}u^Tw(x))}.\end{equation*}
The Kahan map in the form (\ref{eq:rosenbrock2}) 
gives 
\begin{equation*}u^Tv(x')=\frac{u^T(\frac{x'-x}{h})}{1-\frac{h}{2\ell(x')}u^T(\frac{x'-x}{h})}\end{equation*}
and in the form (\ref{eq:rosenbrock}) gives
\begin{equation*}u^Tw(x)=\frac{u^T(\frac{x'-x}{h})}{1+\frac{h}{2\ell(x)}u^T(\frac{x'-x}{h})},\end{equation*}
and so
\begin{equation*}\mathrm{det}\frac{\partial x'}{\partial x}=\frac{\mathrm{det}(I+\frac{h}{2}\ell(x')\, H'')}{\mathrm{det}(I-\frac{h}{2}\ell(x)\, H'')}\frac{\ell(x')(2\ell(x)+u^T(x'-x))}{\ell(x)(2\ell(x')-u^T(x'-x))}.\end{equation*}
Since $\ell$ is linear and homogeneous, $\ell(x)=\nabla\ell(x)^Tx=u^Tx$, and $u$ is a constant vector, so
\begin{equation*}2\ell(x')-u^T(x'-x))=2\ell(x)+u^T(x'-x))=u^T(x+x'),\end{equation*}
giving 
\begin{equation*}\mathrm{det}\frac{\partial x'}{\partial x}=\frac{\mathrm{det}(I+\frac{h}{2}\ell(x')\, JH'')}{\mathrm{det}(I-\frac{h}{2}\ell(x)\, JH'')}\frac{\ell(x')}{\ell(x)}\end{equation*}
as required.
\end{proof}

\begin{proposition}
Kahan's method applied to (\ref{eq:linSuslov}) has a conserved quantity given by 
\begin{equation}
\label{eq:Suslovintegral}
\widetilde{H}:= H(x)+\frac{1}{4}h\ell(x)\nabla H(x)^T\left(I-\frac{h}{2}J\ell(x) H''(x)\right)^{-1}J\,\nabla H(x).
\end{equation}
\end{proposition}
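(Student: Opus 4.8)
The plan is to verify directly that $\widetilde H(x')=\widetilde H(x)$ for the Kahan image $x'$ of $x$, reducing everything to the homogeneity of the data. Since $H$ is a homogeneous quadratic its Hessian $H''$ is a constant symmetric matrix with $\nabla H(x)=H''x$, and since $\ell$ is linear homogeneous, $\ell(x)=u^Tx$ with $u=\nabla\ell$ constant; hence, writing $\xi:=(x'-x)/h$, we have $\nabla H(x')-\nabla H(x)=hH''\xi$ and $\ell(x')-\ell(x)=hu^T\xi$. Throughout I abbreviate $g:=\nabla H(x)$, $g':=\nabla H(x')$, $\ell:=\ell(x)$, $\ell':=\ell(x')$, $\bar\ell:=\tfrac12(\ell+\ell')$ and $M(x):=I-\tfrac h2\ell\,JH''$ (so $J\ell(x)H''(x)=\ell JH''$), so that the correction term in (\ref{eq:Suslovintegral}) reads $\tfrac14 h\ell\,g^TM(x)^{-1}Jg$.

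First I would feed the Rosenbrock form (\ref{eq:rosenbrock}) of the Kahan map into $\xi$. Using $f'(x)=J\nabla H(x)\,u^T+\ell(x)JH''$ (a rank-one part plus the part already inside $M(x)$), the rank-one part applied to $\xi$ contributes the scalar $u^T\xi=(\ell'-\ell)/h$, and after collecting terms one obtains the clean identity $M(x)\,\xi=\bar\ell\,Jg$, equivalently $M(x)^{-1}Jg=\xi/\bar\ell$. The same manipulation applied to (\ref{eq:rosenbrock2}) gives $(I+\tfrac h2\ell'JH'')\,\xi=\bar\ell\,Jg'$. Substituting the first relation into the correction term shows that, along orbits, the correction term at $x$ equals $\tfrac{h\ell}{4\bar\ell}\,g^T\xi$.

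The correction term at $x'$ needs one more observation, and this is the step I expect to be the crux: the matrix occurring in $\widetilde H(x')$ is $I-\tfrac h2\ell'JH''$, whereas the second Rosenbrock relation produced $I+\tfrac h2\ell'JH''$. To reconcile them, consider the scalar $c:=g'^T(I-sJH'')^{-1}Jg'$; transposing it and using $J^T=-J$, $(H'')^T=H''$ together with the elementary commutation $(I+sJH'')J=J(I+sH''J)$ gives $c=-g'^T(I+sJH'')^{-1}Jg'$, i.e.\ $c$ is an odd function of $s$. Taking $s=\tfrac h2\ell'$ and using $(I+\tfrac h2\ell'JH'')\xi=\bar\ell Jg'$, we get $g'^TM(x')^{-1}Jg'=-g'^T\xi/\bar\ell$, so the correction term at $x'$ equals $-\tfrac{h\ell'}{4\bar\ell}\,g'^T\xi$.

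It remains to assemble the pieces. Combining the elementary identity $H(x')-H(x)=\tfrac h2\,\xi^T(g+g')$ with the two correction terms and the relations $2\bar\ell-\ell=\ell'$, $2\bar\ell-\ell'=\ell$, one finds
\begin{equation*}
\widetilde H(x')-\widetilde H(x)=\frac{h}{4\bar\ell}\,\xi^T\bigl(\ell'g+\ell g'\bigr).
\end{equation*}
Finally I would return to $M(x)\xi=\bar\ell Jg$ and eliminate $JH''\xi$ via $hH''\xi=g'-g$; regrouping terms (using $\bar\ell-\tfrac\ell2=\tfrac{\ell'}2$) yields the reflection identity $\xi=\tfrac12 J(\ell'g+\ell g')$. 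Setting $v:=\ell'g+\ell g'$, this reads $\xi=\tfrac12 Jv$, whence $\xi^Tv=\tfrac12 v^TJ^Tv=-\tfrac12 v^TJv=0$ because $J$ is skew-symmetric. Therefore $\widetilde H(x')=\widetilde H(x)$, which completes the proof. (All steps are rational-function identities, valid wherever the matrices being inverted are nonsingular, and hence for general $h$ by continuation.)
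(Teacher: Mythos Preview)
Your proof is correct. The ingredients coincide with the paper's: (i) the Rosenbrock relations yield $M(x)\xi=\bar\ell\,Jg$ and $(I+\tfrac h2\ell'JH'')\xi=\bar\ell\,Jg'$ (the paper writes these as $w(x)=\tfrac{2\ell}{\ell(x+x')}\xi$ and $v(x')=\tfrac{2\ell'}{\ell(x+x')}\xi$); (ii) the ``odd in $s$'' identity is exactly the paper's observation that $\nabla H(x')^T(v(x')+w(x'))=0$; and (iii) the orthogonality $\xi^T(\ell'g+\ell g')=0$ is the paper's equation~(\ref{eq1}), which the paper derives from the Runge--Kutta form and the skew-symmetry of $J$ in the same way you do from $\xi=\tfrac12 Jv$.

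The organizational difference is that the paper introduces the \emph{next} iterate $x''$ to express $\widetilde H(x')$ symmetrically (via $w(x')=\tfrac{2\ell'}{\ell(x'+x'')}(x''-x')/h$) and only then invokes the orthogonality of $v(x')+w(x')$ to $\nabla H(x')$. You avoid $x''$ altogether by first flipping the sign inside $M(x')^{-1}$ with the odd-in-$s$ trick, which immediately reduces the correction term at $x'$ to the same $\xi$. This is slightly cleaner: it keeps the whole computation at the single step $x\mapsto x'$ and makes it transparent that the two correction terms combine with $H(x')-H(x)=\tfrac h2\xi^T(g+g')$ to give exactly $\tfrac{h}{4\bar\ell}\xi^T(\ell' g+\ell g')$, which vanishes by skew-symmetry. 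Both routes rest on the same three identities.
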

\begin{proof}
Let $\bar{x}=(x+x')/2$. Then
\begin{eqnarray*}
0&=h\left(-\tfrac{1}{2}\ell(x)\nabla H(x)+2\ell(\bar{x})\nabla H(\bar{x})-\tfrac{1}{2}\ell(x')\nabla H(x')\right)^TJ^T \\
& \qquad \left(-\tfrac{1}{2}\ell(x)\nabla H(x)+2\ell(\bar{x})\nabla H(\bar{x})-\tfrac{1}{2}\ell(x')\nabla H(x')\right)\\
 &=(x'-x)^T \left(-\tfrac{1}{2}\ell(x)\nabla H(x)+2\ell(\bar{x})\nabla H(\bar{x})-\tfrac{1}{2}\ell(x')\nabla H(x')\right))\\
 &=(x'-x)^T \left(\ell(x)\nabla H (x')+\ell(x')\nabla H(x)\right)),
\end{eqnarray*}
and so
\begin{equation}
\label{eq1}
\nabla H(x)^T(x'\ell(x)+x\ell(x'))=\nabla H(x')^T(x'\ell(x)+x\ell(x')).
\end{equation}
The sum of the vectors
\begin{equation*} v(x'):=(I+\frac{h}{2}\ell(x')JH'')^{-1}f(x'),\quad w(x'):=(I-\frac{h}{2}\ell(x')JH'')^{-1}f(x'),
\end{equation*}
is orthogonal to $\nabla H(x')$:  
\begin{eqnarray*}
\nabla H(x')^T(v(x')+ & w(x')) \\
& =2\ell(x')\nabla H(x')^T\left(I-\frac{1}{4}h^4\ell(x')^2(JH'')^2\right)^{-1}J\nabla H(x') \\
&=0.
\end{eqnarray*}
Further, we have
\begin{equation*}
\begin{array}{ll}
x'-x=h\left(I+\frac{h}{2\ell(x')}v(x')\nabla \ell^T\right)^{-1}v(x'), &  v(x')=\frac{2\, \ell(x')}{\ell(x'+x)}\frac{x'-x}{h}, \\
x''-x'=h\left(I-\frac{h}{2\ell(x')}w(x')\nabla \ell^T\right)^{-1}w(x'), & w(x')=\frac{2\ell(x')}{\ell(x''+x')}\frac{x''-x'}{h},
\end{array}
\end{equation*}
leading to
\begin{equation}
\label{eq3}
w(x')+v(x')=\frac{2\ell(x')\left(\ell(x'+x)(x''-x')-\ell(x''+x')(x'-x)\right)}{h\ell(x''+x')\ell(x'+x)}.
\end{equation}
Using now
\begin{eqnarray*}
w(x)&=\left(I-\frac{h}{2}\ell(x)JH''\right)^{-1}f(x) \\
& =\frac{2\ell(x)}{\ell(x'+x)}\frac{x'-x}{h}
\end{eqnarray*}
and (\ref{eq1}),
we can rewrite $\widetilde{H}(x)$ in the form
\begin{eqnarray}
\label{eq2}
\widetilde{H}(x)&=\frac{\nabla H(x)^T(x\ell(x')+x'\ell(x))}{\ell(x'+x)}\\
&=\frac{\nabla H(x')^T(x\ell(x')+x'\ell(x))}{\ell(x'+x)}.
\end{eqnarray}
Finally using (\ref{eq3})
and (\ref{eq2})
we obtain
\begin{eqnarray*}
\widetilde{H}(x)-\widetilde{H}(x')&=\nabla H(x')^T\left[\frac{x\ell(x')+x'\ell(x)}{\ell(x'+x)}-\frac{x'\ell(x'')+x''\ell(x')}{\ell(x''+x')} \right]\\
&=-h\,\nabla H(x')^T(w(x')+v(x')) \\
&=0.
\end{eqnarray*}

\end{proof}

\begin{proposition}
\label{prop:suslov2}
The Kahan method applied to the Suslov system (\ref{eq:linSuslov}) preserves the measure (\ref{eq:Suslovmeasure}) and integral (\ref{eq:Suslovintegral}) when $\ell(x)$ is a nonhomogeneous linear function, $H$ is a nonhomogeneous quadratic, and $\mathrm{rank}(J)=2$.
\end{proposition}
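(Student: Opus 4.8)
The plan is to reduce the statement to the homogeneous case already established in the two preceding propositions, by homogenizing with respect to one auxiliary variable. Writing $\ell(x)=u^{T}x+\ell_{0}$ and $H(x)=x^{T}Qx+b^{T}x+d$ (so $H''=2Q$ and $\nabla H(x)=H''x+b$), I would introduce a coordinate $x_{m+1}$ and set, on $\R^{m+1}$, $\hat\ell(x,x_{m+1})=u^{T}x+\ell_{0}x_{m+1}$, $\hat H(x,x_{m+1})=x^{T}Qx+(b^{T}x)x_{m+1}+d\,x_{m+1}^{2}$ and $\hat J=\left(\matrix{J&0\cr 0&0}\right)$. Then $\hat\ell$ is linear homogeneous, $\hat H$ quadratic homogeneous, and $\hat J$ constant skew-symmetric, so the two preceding propositions apply verbatim to the Suslov system $\dot y=\hat\ell(y)\,\hat J\,\nabla\hat H(y)$ on $\R^{m+1}$: Kahan's method preserves the measure and the integral obtained from \eqref{eq:Suslovmeasure} and \eqref{eq:Suslovintegral} with $(\ell,H,J)$ replaced by $(\hat\ell,\hat H,\hat J)$.

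The next step is to pass back to $\R^{m}$. Since the last row of $\hat J$ vanishes, the last component of the homogenized vector field is identically zero, so by \eqref{eq:kahanrk} the $\R^{m+1}$ Kahan map fixes $x_{m+1}$ and leaves every hyperplane $\{x_{m+1}=c\}$ invariant; on $\{x_{m+1}=1\}$ the homogenized vector field equals $(f(x),0)$ with $f=\ell\,J\nabla H$, and likewise at midpoints, so the restriction of the $\R^{m+1}$ Kahan map to this hyperplane is precisely the Kahan map of \eqref{eq:linSuslov}. I would then invoke the elementary fact that an invariant density on $\R^{m+1}$ all of whose level hyperplanes $\{x_{m+1}=c\}$ are invariant restricts to an invariant density on each of them, and that an invariant function restricts to an invariant function; so it only remains to evaluate the homogenized formulas at $x_{m+1}=1$.

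That evaluation is a short block-triangular computation, which I expect to be the only real bookkeeping. From $\hat H''=\left(\matrix{H''&b\cr b^{T}&2d}\right)$ one gets $I-\tfrac h2\hat\ell\,\hat J\hat H''=\left(\matrix{I-\tfrac h2\hat\ell\,JH''&-\tfrac h2\hat\ell\,Jb\cr 0&1}\right)$, whose determinant is $\det(I-\tfrac h2\hat\ell\,JH'')$; since $\hat\ell|_{x_{m+1}=1}=\ell(x)$, the homogenized measure restricts to \eqref{eq:Suslovmeasure}. Similarly $\nabla\hat H|_{x_{m+1}=1}=(\nabla H(x),\ast)$ and $\hat J\nabla\hat H|_{x_{m+1}=1}=(J\nabla H(x),0)$, and the inverse of the block-triangular matrix above kills the trailing zero, so the quadratic-form term of the homogenized integral restricts to $\ell(x)\nabla H(x)^{T}(I-\tfrac h2\ell(x)JH'')^{-1}J\nabla H(x)$, i.e. the restriction of $\widetilde{\hat H}$ is exactly \eqref{eq:Suslovintegral}. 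I do not anticipate any genuine obstacle beyond seeing the homogenization: everything after it is bookkeeping plus the two already-proved propositions. (I note that $\mathrm{rank}(J)=2$ is in fact not needed for the conservation statement itself; it enters only in that, together with the $m-2$ linear first integrals $w^{T}x$ for $w\in\ker J$ — which Kahan's map also preserves, since $w^{T}f\equiv 0$ — it reduces the dynamics to a plane and thereby yields integrability.)
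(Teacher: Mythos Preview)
Your homogenization argument is correct and takes a genuinely different route from the paper. The paper exploits the hypothesis $\mathrm{rank}(J)=2$ directly: it puts $J$ in normal form, thereby reducing to a planar Suslov system with finitely many free parameters, and then simply asserts that preservation of \eqref{eq:Suslovmeasure} and \eqref{eq:Suslovintegral} can be verified by a symbolic computation. Your approach, by contrast, lifts to $\R^{m+1}$, applies the two already-proved homogeneous propositions to $(\hat\ell,\hat H,\hat J)$, and restricts to $\{x_{m+1}=1\}$; the block-triangular bookkeeping you outline is correct (in particular, $\det(I-\tfrac h2\hat\ell\hat J\hat H'')=\det(I-\tfrac h2\ell JH'')$ and the trailing zero in $\hat J\nabla\hat H$ kills the off-diagonal block of the inverse), and the restriction of an invariant density across the foliation $\{x_{m+1}=c\}$ works because the Jacobian of the $(m{+}1)$-map is block upper-triangular with bottom-right entry $1$. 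What the paper's approach buys is brevity at the cost of a computer-algebra check; what yours buys is a conceptual, machine-free proof that moreover shows---as you note---that the rank hypothesis is irrelevant to the conservation statements themselves and is only needed, via the linear Casimirs $w^{T}x$ for $w\in\ker J$, to conclude integrability.
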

\begin{proof} 
$J$ may be taken in its normal form, reducing the situation to two dimensions. Now the problem only has finitely many parameters and the preservation of the measure and integral can be checked algebraically.
\end{proof}


\subsection{Generalized Ishii equations}

The Ishii system \cite{ishii90ppa} is
\begin{equation}
\label{Ishii}
\begin{array}{lcl}
\dot{x}& =&y \\
\dot{y}&=&z \\
\dot{z}&= &12xy.
\end{array}
\end{equation}
The following two functions 
\begin{eqnarray} 
H_1 &=z- 6 x^2 \label{eq:exinv1}\\ 
H_2  &=xz-\frac12 y^2 - 4 x^3 \label{eq:exinv2}
\end{eqnarray}
are first integrals of the system (\ref{Ishii}), see \cite{aron}.
Since the flow of this system is also volume preserving, the equations are completely integrable.
\begin{proposition}
\label{prop:ishii}
The Kahan method is volume preserving for this problem and it also has the following two invariants:
\begin{eqnarray}
\widetilde{H}_1 &= z - 6 x^2 + \frac{3}{2} h^2 y^2, \label{eq:numinv1} \\
\widetilde{H}_2 &= x z - \frac12 y^2 - 4 x^3 + h^2(x y^2 + \frac{1}{24}z^2). \label{eq:numinv2}
\end{eqnarray}
\end{proposition}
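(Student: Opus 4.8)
The plan is to verify all three claims by direct computation, exploiting the special structure of the Ishii system to keep the algebra manageable. First I would write down the Kahan map for \eqref{Ishii} explicitly. Since the vector field $f(x,y,z) = (y, z, 12xy)$ is quadratic with the only nonlinearity in the last component, the Jacobian $f'$ is
\begin{equation*}
f'(x,y,z) = \left(\matrix{0 & 1 & 0 \cr 0 & 0 & 1 \cr 12y & 12x & 0}\right),
\end{equation*}
so $I - \tfrac{h}{2}f'(x)$ is lower-Hessenberg and easily inverted; using the Rosenbrock form \eqref{eq:rosenbrock} this gives $x', y', z'$ as explicit rational functions of $(x,y,z)$ with a common denominator $\delta = \delta(x,y,z;h)$ which is a polynomial (in fact I expect $\delta = 1 - 3h^2 xy - $ lower-order terms, coming from $\det(I - \tfrac{h}{2}f')$ up to a shift). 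Having these formulas in hand, volume preservation follows from the identity
\begin{equation*}
\det\frac{\partial x'}{\partial x} = \frac{\det\big(I + \tfrac{h}{2}f'(x')\big)}{\det\big(I - \tfrac{h}{2}f'(x)\big)},
\end{equation*}
so I must check that $\det(I - \tfrac{h}{2}f'(x))$ depends only on the combination that is itself invariant — equivalently, that $\det(I-\tfrac h2 f'(x)) = \det(I+\tfrac h2 f'(x'))$. Because $\det(I-\tfrac h2 f'(x))$ is linear in the entries $x,y$ (the matrix is $3\times 3$ with a single bilinear entry), this reduces to checking a polynomial identity after substituting the rational expressions for $x',y'$, which is routine.

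For the two invariants, the cleanest route is to use the symmetric Runge--Kutta form \eqref{eq:kahanrk}: with $\bar x = (x+x')/2$ one has $(x'-x)/h = -\tfrac12 f(x) + 2f(\bar x) - \tfrac12 f(x')$. I would then compute $\widetilde H_i(x') - \widetilde H_i(x)$ directly. For $\widetilde H_1 = z - 6x^2 + \tfrac32 h^2 y^2$ this is quick: $z'-z$, $x'^2 - x^2 = (x'-x)(x'+x) = 2\bar x (x'-x)$, and $y'^2 - y^2 = 2\bar y(y'-y)$, and each difference $x'-x$, $y'-y$, $z'-z$ is read off from \eqref{eq:kahanrk}; collecting terms should produce an identity that vanishes because of the structure $\dot x = y$, $\dot y = z$. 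The $h^2 y^2$ correction is precisely what compensates the $O(h)$ error, and the computation is low-degree. For $\widetilde H_2$ the same strategy works but the bookkeeping is heavier: one expands $x'z' - xz$, $y'^2-y^2$, $x'^3 - x^3$ and the correction terms $x'y'^2 - xy^2$, $z'^2 - z^2$, substitutes the Kahan increments, clears the common denominator $\delta$, and checks that the resulting polynomial numerator vanishes identically in $x,y,z,h$.

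Alternatively — and this is probably the slicker organization — I would verify $\widetilde H_i(x') = \widetilde H_i(x)$ directly from the rational formulas for $x',y',z'$ obtained via \eqref{eq:rosenbrock}: substitute, put over the common denominator $\delta^k$ for the appropriate power $k$ (here $k=1$ for $\widetilde H_1$ since it is quadratic, $k=2$ for $\widetilde H_2$ since it is cubic — note $\delta$ is linear in the variables, so degrees match), and confirm the numerator of $\widetilde H_i(x')\delta^k - \widetilde H_i(x)\delta^k$ is the zero polynomial. This is a finite check in finitely many parameters (here there are no free parameters at all beyond $h$), exactly as in the proof of Proposition~\ref{prop:suslov2}.

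The main obstacle is purely the size of the $\widetilde H_2$ computation: $\widetilde H_2$ is cubic, $x',y',z'$ are rational of degree growing with $h$, and the cross terms $x'z'$, $x'^3$, $x'y'^2$, $z'^2$ generate a fairly large polynomial before cancellation. There is no conceptual difficulty — the identity must hold — but organizing the expansion so that the cancellation is visible (for instance, by grouping according to powers of $h$, or by first establishing an intermediate relation analogous to \eqref{eq1} in the Suslov proof that expresses a symmetry between $x$ and $x'$) is where care is needed. I would handle this by treating the $h^0$, $h^1$, $h^2$, \dots\ terms of $\widetilde H_i(x') - \widetilde H_i(x)$ separately: the $h^0$ part vanishes because $H_i$ are integrals of \eqref{Ishii}, the $h^1$ part vanishes because Kahan is symmetric (second order), and only the higher-order terms require the explicit correction terms and the explicit map — and these are bounded in degree, so the check terminates.
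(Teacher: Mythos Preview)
Your proposal is correct and matches the paper's approach, which is simply direct verification: the paper's entire proof reads ``The proof is obtained with a symbolic computing package.'' Your write-up is considerably more detailed than the paper's, outlining how one might organize the check by hand (the Jacobian-determinant identity for volume preservation, and clearing denominators / grouping by powers of $h$ for the two invariants). One minor slip worth flagging: your guessed form of the common denominator $\delta$ is off --- computing $\det\bigl(I-\tfrac{h}{2}f'(x)\bigr)$ for this system gives $1 - 3h^2 x - \tfrac{3}{2}h^3 y$, not a term in $xy$ --- but since you hedged this and it does not affect the strategy, it is harmless.
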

\begin{proof}
The proof is obtained with
a symbolic computing package.
\end{proof}
In the following we present a generalization of this system and some results regarding its discretization using Kahan's method; the proofs are all obtained with a symbolic computing package.

\begin{proposition}
\label{propoIshii1}
Consider the following divergence-free generalization of the Ishii  system 
\begin{equation}
\label{eqGIshii}
\begin{array}{lcl}
\dot{x} &= &-c_2 x + b_2 y + b_3 z \\
\dot{y} &= &c_1 x + c_2 y + c_3 z \\
\dot{z} &= &a_{11}x^2 + a_{12}xy + a_{22} y^2
\end{array}
\end{equation}
where $b_2, b_3, c_1, c_2, c_3, a_{11}, a_{12}, a_{22}$ are arbitrary parameters.
Kahan's method applied to this system preserves volume   if and only if the following two conditions are satisfied
\begin{eqnarray}
  b_2a_{11} + c_2a_{12} - c_1a_{22} &=0, \label{cond1}\\
  b_3^2\, a_{11} + b_3c_3\, a_{12} + c_3^2\, a_{22} &= 0.\label{cond2}
\end{eqnarray}
\end{proposition}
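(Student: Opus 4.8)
The plan is to turn the volume-preservation requirement into a polynomial identity in the phase-space variables whose coefficients are polynomials in the eight parameters, and then to solve that system. As in the proof of Proposition~\ref{prop:nambu}, the Jacobian of the Kahan map is $D=(I-\tfrac h2 f'(x))^{-1}(I+\tfrac h2 f'(x'))$, so
\[
 \det D=\frac{\det(I+\tfrac h2 f'(x'))}{\det(I-\tfrac h2 f'(x))},
\]
and the method preserves $dx_1\wedge dx_2\wedge dx_3$ exactly when $\det D\equiv1$. The structural point that makes this tractable is that for the field \eqref{eqGIshii} the Jacobian $f'$ has non-constant entries only in its last row, depending linearly on $x,y$ and not on $z$. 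Hence $g_\pm(x,y):=\det(I\pm\tfrac h2 f'(x,y))$ are affine in $(x,y)$ alone; a short computation shows that their constant terms agree and are even in $h$, so we may write $g_-(x,y)=\kappa+A_xx+A_yy$ and $g_+(x,y)=\kappa+B_xx+B_yy$ with $\kappa,A_x,A_y,B_x,B_y$ explicit polynomials in the parameters and $h$.

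Volume preservation is then equivalent to the single affine relation $g_+(x',y')=g_-(x,y)$, that is, $B_xx'+B_yy'=A_xx+A_yy$. I would substitute the Rosenbrock form \eqref{eq:rosenbrock}, which gives $x'=x+hN_1/g_-(x,y)$ and $y'=y+hN_2/g_-(x,y)$ with $N_1,N_2$ the first two components of $\adj(I-\tfrac h2 f'(x))f(x)$, and clear the denominator $g_-(x,y)$. Since the Kahan map is defined on a Zariski-dense open set, $\Phi_h$ is volume-preserving if and only if the resulting polynomial relation in $(x,y,z)$ holds identically.

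The final step is to expand this identity and set every coefficient of a monomial in $x,y,z$, at every power of $h$, to zero, producing a finite list of polynomial equations in $b_2,b_3,c_1,c_2,c_3,a_{11},a_{12},a_{22}$; the assertion is that the ideal they generate coincides with the one generated by the bilinear relations \eqref{cond1} and \eqref{cond2}, and conversely that imposing \eqref{cond1} and \eqref{cond2} kills every coefficient. I expect this elimination/ideal-membership check to be the real work -- it is the part the paper hands to a symbolic algebra package -- and it is genuinely necessary rather than cosmetic: the reduction is \emph{not} simply ``$g_-$ is constant'', since there are parameter values, e.g. $a_{11}=a_{12}=1$, $a_{22}=-2$, $b_3=c_3=1$, $b_2=c_1=c_2=0$, at which \eqref{cond1}--\eqref{cond2} hold while $g_-$ is non-constant (there the identity survives because the linear first integral $x-y$ is preserved by $\Phi_h$, as it must be for any Kahan discretization). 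It therefore seems wisest to run the coefficient comparison for general parameters, organised by degree in $h$, rather than to look for a shortcut.
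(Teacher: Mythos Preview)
Your proposal is correct and is, in fact, considerably more detailed than the paper's own proof, which consists of the single sentence ``The proof is obtained with a symbolic computing package.'' What you have written is precisely the structural setup one would feed to such a package: the determinant identity $\det D=g_+(x',y')/g_-(x,y)$, the observation that $g_\pm$ are affine in $(x,y)$ with matching constant terms (both correct for this $f'$), the substitution via \eqref{eq:rosenbrock}, and the reduction to a finite system of polynomial conditions on the eight parameters. Your remark that the conditions \eqref{cond1}--\eqref{cond2} are \emph{not} equivalent to ``$g_-$ is constant,'' together with the explicit example witnessing a preserved linear integral, is a genuine addition that the paper does not make; it explains why the ideal computation really is needed and cannot be short-circuited.
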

\begin{proof}
The proof is obtained with
a symbolic computing package.
\end{proof}
Alternatively, one may express the parameters $a_{ij}$  as
\begin{equation}\label{eq:aeq}
a_{11} = k A_2c_3,\quad a_{12} = -k(A_1c_3+A_2b_3),\quad a_{22}=k A_1 b_3,
\end{equation}
where $k$ is an arbitrary parameter.  Here 
 \begin{equation*}
 A_1=b_2c_3-b_3c_2,\quad A_2=c_2c_3+b_3c_1,\quad A_3=-(b_2c_1+c_2^2),
 \end{equation*}
 which are just the two-forms $\mathrm{d}x_i\wedge \mathrm{d}x_j$ applied to the two 3-vectors $(b_j)$ and $(c_j)$ where $b_1=-c_2$.
%
%
%
%
%
\begin{proposition}
Under the conditions (\ref{cond1}), (\ref{cond2}), the system (\ref{eqGIshii}) has the  invariants
\begin{eqnarray} 
H_1 &= z+\frac{k}2(c_3\,x-b_3\, y)^2,   \label{eq:genexinv1} \\
H_2&=\frac{k}3\, \left( c_{3}x-b_{3}y \right)^{3}+\frac{c_1}{2}\,{x}^{2}+c_{2}\,xy+c_{3}\,xz-\frac{b_2}{2}\,{y}^{2}-b_{3}\,yz.
 \label{eq:genexinv2}
\end{eqnarray}
The system is completely integrable.
\end{proposition}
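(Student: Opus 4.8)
The plan is to verify the two conservation laws by direct differentiation along the flow, arranged so that the cubic terms cancel cleanly, and then to conclude integrability from the Euler--Jacobi (last multiplier) principle, exactly as for the Nambu systems of Subsection~\ref{subs:nambu}. As a preliminary remark, the vector field \eqref{eqGIshii} is divergence-free (its divergence is $-c_2+c_2+0=0$) for all parameter values, so the flow preserves $\mathrm{d}x\wedge\mathrm{d}y\wedge\mathrm{d}z$.

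For the first integrals I would introduce the linear form $w:=c_3x-b_3y$. Differentiating and using the equations for $\dot x$ and $\dot y$ gives $\dot w=-A_2x+A_1y$, with $A_1,A_2$ the determinants defined just after \eqref{eq:aeq}. Inserting the expressions \eqref{eq:aeq} for $a_{11},a_{12},a_{22}$ into $\dot z=a_{11}x^2+a_{12}xy+a_{22}y^2$ and factoring should yield the single identity $\dot z=-k\,w\,\dot w$; this gives immediately $\frac{\mathrm{d}}{\mathrm{d}t}(z+\tfrac{k}{2}w^2)=0$, i.e.\ $\dot H_1=0$. For $H_2$, write $H_2=\tfrac{k}{3}w^3+Q$ with $Q=\tfrac{c_1}{2}x^2+c_2xy+c_3xz-\tfrac{b_2}{2}y^2-b_3yz$. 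The only $z$-dependence of $Q$ is through the monomial $wz$, and \eqref{eqGIshii} shows $\partial_x Q=\dot y$ and $\partial_y Q=-\dot x$, so $\dot Q=w\,\dot z+\dot y\,\dot x-\dot x\,\dot y=w\,\dot z$; combining with $\frac{\mathrm{d}}{\mathrm{d}t}(\tfrac{k}{3}w^3)=k\,w^2\dot w=-w\,\dot z$ gives $\dot H_2=0$. (Conceptually $Q$ is a Hamiltonian for the linear $(x,y)$-subsystem with $z$ frozen as a parameter, which is why its $(x,y)$-part is automatically conserved; this observation also keeps the algebra short.)

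Finally, the gradients $\nabla H_1=(kc_3w,\,-kb_3w,\,1)$ and $\nabla H_2=(kc_3w^2+\dot y,\,-kb_3w^2-\dot x,\,w)$ are linearly independent on a dense open subset of $\R^3$ (the last entry of $\nabla H_1$ never vanishes, and $\nabla H_2$ fails to be proportional to $\nabla H_1$ except on the planes $\dot x=0$, $\dot y=0$), so the generic common level set $\{H_1=\mathrm{const},\,H_2=\mathrm{const}\}$ is a curve; with the invariant volume this makes the flow integrable by a single quadrature, in the same sense as in Subsection~\ref{subs:nambu}. The computations here are elementary and could also be handed to a symbolic package; the only real point to watch is the bookkeeping in the $H_2$ step, where one must notice that the cubic term is killed by the $w\,\dot z$ contribution before the residual quadratic piece is treated, and that ``completely integrable'' is meant in the Euler--Jacobi sense (two functionally independent first integrals plus an invariant volume in $\R^3$) already used in this paper.
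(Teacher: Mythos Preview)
Your argument is correct and is in fact more informative than the paper's own proof, which simply reads ``The invariants are obtained with a symbolic computing package. Since the flow of this system is also volume preserving, the system is completely integrable.'' The only substantive ingredients you use beyond that are the parametrisation \eqref{eq:aeq} (already asserted in the paper to be equivalent to \eqref{cond1}--\eqref{cond2}, and necessarily in play since the statement itself is written in terms of $k$) and the Euler--Jacobi reasoning for complete integrability, which is exactly what the paper invokes as well.

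What your route buys over the paper's is a structural explanation rather than a black-box check: the factorisation $\dot z=-k\,w\,\dot w$ makes $\dot H_1=0$ transparent, and recognising $Q$ as a planar Hamiltonian for the $(x,y)$-subsystem (with $z$ frozen) explains why the quadratic part of $\dot H_2$ cancels automatically, leaving only the $w\dot z$ term to be matched by the derivative of $\tfrac{k}{3}w^3$. The paper's approach, by contrast, is just a machine verification and gives no such insight; on the other hand it requires no bookkeeping and does not depend on first rewriting the $a_{ij}$ via \eqref{eq:aeq}. Your independence check for $\nabla H_1$ and $\nabla H_2$ is also a small addition the paper leaves implicit.
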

\begin{proof}
The invariants are obtained with
a symbolic computing package. Since the flow of this system is also volume preserving, the system is completely integrable.
\end{proof}
\vskip0.3cm
\begin{proposition} Under the conditions \eqref{cond1}, \eqref{cond2},
Kahan's method applied to \eqref{eqGIshii} has the   invariants 
 \begin{eqnarray*}
 \widetilde{H}_1 &= z + \frac{k}2(c_3\,x-b_3\, y)^2 - \frac{k h^2}{8}\big( A_2\,x - A_1\,y \big)^2, \\
 \widetilde{H}_2 &= H_2 + \frac{h^2}{24}\bigg(A_3(-c_1 x^2-2c_2xy-2c_3xz+2b_3yz+b_2y^2)+(A_1c_3-A_2b_3)z^2\\
 & +k(-2c_3A_2^2 x^3+2A_2(b_3A_2+2c_3A_1)x^2 y -2A_1(c_3A_1+2b_3A_2)xy^2+2b_3A_1^2 y^3)\bigg).
 \end{eqnarray*}
Kahan's method applied to \eqref{eqGIshii} 
yields a completely integrable map.
 \end{proposition}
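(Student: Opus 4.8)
The plan is to reduce the statement to a finite symbolic verification, exactly in the spirit of Proposition~\ref{propoIshii1} and the two preceding Ishii propositions, and then to read off complete integrability from the invariant measure of Proposition~\ref{propoIshii1} together with the two first integrals.

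First I would write the Kahan map for \eqref{eqGIshii} in the Rosenbrock form \eqref{eq:rosenbrock}. Since $f$ is quadratic, $f'(x)$ is affine; moreover only the third component of $f$ is nonlinear, so $\Delta(x):=\det\bigl(I-\tfrac h2 f'(x)\bigr)$ is in fact an affine function of $(x,y,z)$, and $x'=x+h\,\adj\bigl(I-\tfrac h2 f'(x)\bigr)f(x)/\Delta(x)$ is a rational map of low degree. To pass to a computation with finitely many parameters, and to enforce \eqref{cond1}--\eqref{cond2} automatically, I would substitute the parametrisation \eqref{eq:aeq} for $a_{11},a_{12},a_{22}$; what remains is the six-parameter family $(b_2,b_3,c_1,c_2,c_3,k)$ together with the step size $h$.

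Next I would form $\widetilde H_i(x')-\widetilde H_i(x)$ for $i=1,2$. As $\widetilde H_1$ is a quadratic polynomial and $\widetilde H_2$ a cubic one, these are rational functions whose denominators divide $\Delta(x)^2$ and $\Delta(x)^3$ respectively, and the entire content of the statement is that, after clearing denominators, the polynomial numerators vanish identically in $x,y,z,h$ and the six parameters. The hard part will simply be the size of this computation, so in practice it is carried out with a computer algebra system, taking care over the cancellation of the common factor $\Delta(x)$. A cheap consistency check along the way is the limit $h\to 0$, in which $\widetilde H_i$ reduces to $H_i$ of \eqref{eq:genexinv1}--\eqref{eq:genexinv2} and the Kahan map reduces to the exact flow, for which invariance has already been established. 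It is also worth recording the structural reason behind $\widetilde H_1$: with $\ell:=c_3x-b_3y$ one has $\dot\ell=-(A_2x-A_1y)$ and, by \eqref{eq:aeq}, $a_{11}x^2+a_{12}xy+a_{22}y^2=k\,\ell\,(A_2x-A_1y)$, so $H_1=z+\tfrac k2\ell^2$ is manifestly a flow invariant; the Kahan modification remains a quadratic polynomial in the two linear forms $\ell$ and $A_2x-A_1y$, which both narrows the ansatz for $\widetilde H_1$ and makes that part of the verification short.

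Finally, to obtain complete integrability I would check that $\widetilde H_1$ and $\widetilde H_2$ are functionally independent: their differentials are linearly independent on a dense open set, since this holds for $H_1,H_2$ at $h=0$ and the condition is algebraic (hence generically valid) in $h$. Combined with the invariant volume form of Proposition~\ref{propoIshii1}, this suffices: in $\R^3$ a map preserving a volume form and two independent integrals is integrable, because contracting the invariant volume with $d\widetilde H_1\wedge d\widetilde H_2$ yields an invariant $1$-density on each common level curve $\{\widetilde H_1=\mathrm{const},\ \widetilde H_2=\mathrm{const}\}$, so the map acts as a translation on each connected component; in particular the orbits are confined to these curves, so the degrees of the iterates cannot grow exponentially and the algebraic-entropy criterion used throughout the paper is met.
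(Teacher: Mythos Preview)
Your proposal is correct and follows essentially the same approach as the paper: the invariants are verified by direct symbolic computation, and complete integrability is then read off from the volume preservation of Proposition~\ref{propoIshii1} together with the two conserved quantities. You supply more detail than the paper does (the parametrisation via \eqref{eq:aeq}, the structural observation about $\widetilde H_1$, and the explicit functional-independence and integrability argument), but the underlying strategy is the same.
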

 \begin{proof}
The invariants are obtained with
a symbolic computing package. Since by Proposition~\ref{propoIshii1}, Kahan's method applied to \eqref{eqGIshii} is also volume preserving, we can conclude it yields a completely integrable map.
\end{proof}
 Notice that in the original Ishii equations, one has $A_1=1, A_2=0, A_3=0$ as well as $k=-12$, $b_2=c_3=1$, $b_3=c_1=c_2=0$.

\section{Non-autonomous problems}

\subsection{Riccati equations}

In a primary motivating example, Kahan \cite{kahan} considered the scalar equation $\dot x = x^2+t\,$ for which Kahan's method showed a remarkable ability to integrate through and converge past singularities. 
This is explained by the following result.

\begin{proposition}
Kahan's method applied to the scalar Riccati differential equation
\begin{equation*} \dot x = b(t) + 2 a(t) x - c(t) x^2,\end{equation*}
in which the coefficients are evaluated at any suitable point, is integrable.
\end{proposition}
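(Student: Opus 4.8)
The plan is to show that the Kahan map of the Riccati equation is, after a suitable change of variables, a Möbius transformation of the line, which is manifestly integrable. First I would recall that the Riccati equation is linearizable: writing $x = -\dot u/(c(t) u)$ turns $\dot x = b + 2ax - cx^2$ into a second-order linear ODE for $u$, so the continuous flow is a projective (fractional-linear) action on $x$. The discrete statement should inherit this structure, because Kahan's method is a linear combination of $f(x)$, $f(x')$ and $f((x+x')/2)$ and $f$ is quadratic: the defining relation \eqref{eq:kahanrk} for a scalar quadratic $f(x) = b(t) + 2a(t)x - c(t)x^2$ is \emph{linear in $x'$} (the $x'^2$ terms from the two outer stages cancel against the cross term of the midpoint stage). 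Hence $x'$ is an explicit rational function of $x$ of the form $x' = (p(t) x + q(t))/(r(t) x + s(t))$, i.e. a Möbius transformation, with coefficients depending on $t$ (and on $h$ and on whatever ``suitable point'' the $t$-dependent coefficients are sampled at).

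Concretely, the key steps are: (i) substitute the scalar quadratic into \eqref{eq:kahanrk} or, more conveniently, into the Rosenbrock form \eqref{eq:rosenbrock}, $x' - x = h(1 - \tfrac h2 f'(x))^{-1} f(x)$ with $f'(x) = 2a - 2cx$; (ii) observe that the right-hand side is $h f(x)/(1 - ha + hcx)$, whose numerator $h(b + 2ax - cx^2)$ combines with the denominator $(1 - ha + hcx)$ after adding $x$ to give a single ratio whose numerator is \emph{at most linear} in $x$ — the $-hcx^2$ in $h f(x)$ is cancelled by the $+hcx \cdot x$ coming from $x\,(1-ha+hcx)$; (iii) read off the $2\times 2$ coefficient matrix $M(t)$ of the resulting Möbius map $x \mapsto x'$, whose entries are affine in $h$ with coefficients built from $a(t), b(t), c(t)$; (iv) conclude that the $n$-step map is represented by a product of such matrices, hence is itself Möbius, so the degrees of numerator and denominator of the iterates stay bounded by $1$ — this is as far from exponential growth as possible, so the algebraic-entropy criterion for integrability is satisfied. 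One can, if desired, exhibit the explicit conserved structure by the $u$-substitution above: the discrete system becomes a linear recurrence for $(u_n, u_{n-1})$, i.e. it is linearizable, which is the strongest possible form of integrability.

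The main (and only) obstacle is the bookkeeping in step (ii)–(iii): one must check honestly that \emph{all} quadratic-in-$x'$ and quadratic-in-$x$ terms cancel, so that the map really is Möbius and not merely rational of higher degree; this is exactly the ``magic cancellation'' the introduction alludes to, and for the Riccati case it is forced by the fact that $f$ has no constant-matrix Hessian obstruction — $f''$ is a multiple of $cx^2$ with no $y$-type variable to spoil the cancellation. A secondary point worth a sentence is to note that the phrase ``the coefficients are evaluated at any suitable point'' is harmless: whatever frozen values $a_\star, b_\star, c_\star$ one plugs in at each step, $x'$ is still Möbius in $x$, and a composition of Möbius maps is Möbius, so integrability in the degree-growth sense is insensitive to the choice. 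No genuinely hard estimate or deep structural input is needed beyond \eqref{eq:rosenbrock} and the observation that a scalar quadratic vector field produces a relation that is affine in $x'$.
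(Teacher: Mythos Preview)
Your proposal is correct and follows essentially the same route as the paper: show that the Kahan step is a M\"obius map in $x$, then linearize. The paper is slightly more direct---it writes the Kahan step in its bilinear form $\tfrac{x'-x}{h}=b+a(x+x')-cxx'$ rather than passing through the Rosenbrock formula, solves for $x'=\dfrac{x+h(ax+b)}{1+h(cx-a)}$, and linearizes via $x=u/v$ (a first-order system for $(u,v)$ rather than your second-order scalar recurrence)---but the content is identical.
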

\begin{proof}
Writing $a$ for the evaluation of the coefficients {\tempcolor at the point $t_{n+\frac{1}{2}} := (n+\frac{1}{2})h$} (e.g. $a=a(t_{n+1/2})$), etc., Kahan's method is
\begin{equation*} \frac{x'-x}{h} = b +  a (x+x') - c x x'\end{equation*}
whose solution is the M\"obius map
\begin{equation}
\label{eq:mob}
 x' = \frac{x + h(a x + b)}{1 + h (c x - a)}
 \end{equation}
which is  linearized by the change of variables $x=u/v$ to $u' = u + h(a u + b v)$, $v' = v + h ( c u - a v)$.
\end{proof}

Indeed, this is an example of what Schiff and Shnider \cite{schiff99ana}  call {\em M\"obius integrators} for Riccati equations. They give a geometric description of a (matrix) Riccati differential equation as a local coordinate version of an equation on a Grassmannian. In the present example, $x=u/v$ where $u$ and $v$ obey the linear nonautonomous system
\begin{equation}
\label{eq:gr}
\left(\matrix{ \dot u \cr \dot v } \right) = 
\left(\matrix{ a & b \cr c & d }\right) \left(\matrix{ u \cr v} \right)
\end{equation}
where $d=-a$.
Any linear integrator applied to this system yields a M\"obius integrator; Kahan's method for $x$ turns out to be equivalent to applying  Euler's method to (\ref{eq:gr}). This explains why Kahan was able to get such good results for $\dot x = x^2+t$ and to integrate successfully through singularities.

\begin{corollary}
Kahan's method applied to quadratic nonautonomous systems of the {\tempcolor special} form
\begin{equation*}\dot x_i = f_i(x_1,\dots,x_i,t),\quad i=1,\dots,n,\ x_i(t)\in\R\end{equation*}
is integrable.
\end{corollary}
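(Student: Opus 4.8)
The plan is to exploit the triangular (cascade) structure of the system and reduce each component to a scalar M\"obius map of the type handled in the previous proposition. The first observation is that Kahan's method preserves this structure. Indeed, since Kahan's method is obtained from the polarization $-\tfrac12 f(x)+2f(\tfrac{x+x'}{2})-\tfrac12 f(x')$, its $i$-th component equation involves only the $i$-th component of $f$, namely $f_i$, evaluated at $x$, $\tfrac{x+x'}{2}$ and $x'$; as $f_i$ depends only on $x_1,\dots,x_i$, the $i$-th Kahan equation involves only $x_1,\dots,x_i,x_1',\dots,x_i'$. So the Kahan map again has the triangular form $x_i' = \phi_i(x_1,\dots,x_i)$, and one may solve the equations successively for $i=1,2,\dots,n$.

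Second, because Kahan's method replaces each $x_jx_k$ by $\tfrac12(x_jx_k'+x_kx_j')$ and each $x_j^2$ by $x_jx_j'$, the $i$-th equation is bi-affine in the pair $(x_i,x_i')$ once $x_1,\dots,x_{i-1}$ and their images $x_1',\dots,x_{i-1}'$ are regarded as known (together with the $t$-dependent coefficients of $f_i$, evaluated at a suitable point $t_{n+1/2}$ as before). Solving for $x_i'$ therefore expresses it as a M\"obius function of $x_i$ with coefficients rational in $x_1,\dots,x_{i-1}$. Linearizing this relation by $x_i=u_i/v_i$, exactly as in the proof of the preceding proposition, turns the $i$-th level into a linear map $(u_i,v_i)\mapsto M_i(u_i,v_i)$ whose $2\times2$ matrix $M_i$ has entries that are polynomials of bounded degree in $x_1,\dots,x_{i-1},x_1',\dots,x_{i-1}'$. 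For $i=1$, $M_1$ depends only on $t$, so $x_1^{(n)}$ is a M\"obius function of the initial data, of degree bounded uniformly in $n$; proceeding by induction, if $x_1^{(n)},\dots,x_{i-1}^{(n)}$ are rational in the initial data with degree bounded by a polynomial in $n$, then so is $M_i^{(n)}$, and hence so is the ordered product $M_i^{(n-1)}\cdots M_i^{(0)}$ and the resulting $x_i^{(n)}$. Thus every iterate has at most polynomial degree growth, and the algebraic-entropy criterion gives integrability.

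The step that needs the most care is the degree bookkeeping in this induction: one must verify that substituting the lower-component iterates (rational, with polynomially growing degree) into the coefficient matrix $M_i$ and then forming an $n$-fold ordered matrix product does not produce exponential degree growth. What makes this go through is the strict triangularity together with the fact that each $M_i$ is only $2\times2$: there is no feedback from higher to lower indices, so no mechanism for degrees to multiply. Making that bound explicit — and noting that the cancellation of common factors implicit in the algebraic-entropy degree count can only decrease the relevant degrees — is the one genuinely technical point; everything else is the direct symbolic verification already used for the Riccati proposition.
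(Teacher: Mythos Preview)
Your proof follows essentially the same route as the paper's: exploit the triangular (cascade) structure, which Kahan's method preserves, reduce each level to a M\"obius map in $x_i$ with coefficients determined by the already-solved lower levels, and induct. The paper's version is terser---it invokes the Runge--Kutta foliation property and the preceding Riccati proposition and then says ``and so on''---while your explicit degree-growth bookkeeping via the $2\times 2$ matrix products fleshes out what the paper leaves implicit.
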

\begin{proof}
Kahan's method is a Runge--Kutta method, so it preserves the linear foliations $x_1=\dots=x_k=$ const, and the reduced methods for the subsystems are also given by Kahan's method \cite{mclachlan03lgf}. The equation for $x_1$ is a scalar Riccati differential equation for which Kahan's method is integrable. Substituting this solution into $\dot x_2 = f_2(x_1,x_2,t)$ yields a scalar Riccati differential equation for $x_2$, and so on.
\end{proof}

Now let $x\in\R^{n\times m}$, $a(t)\in\R^{n\times n}$, $b(t)\in\R^{n\times m}$, $c(t)\in\R^{m\times n}$, $d(t)\in\R^{m\times m}$ and consider the matrix Riccati differential equation
\begin{equation}
\label{eq:mrde}
\dot x = a(t)x + b(t) - x c(t) x - x d(t)
\end{equation}
which is the projection of  (\ref{eq:gr}) under $x = u v^{-1}$. 
Kahan's method yields the Lyapunov equation
\begin{equation*} \frac{x'-x}{h} = \frac{1}{2}a (x+x') + b - \frac{1}{2}x c x' - \frac{1}{2} x' c x - \frac{1}{2}(x+x')d.\end{equation*}
for $x'$. 
Although numerical tests indicate that Kahan's method is also able to integrate through singularities for all smooth coefficient functions $a(t)$, $b(t)$, $c(t)$, and $d(t)$, we do not know if it is in general a M\"obius integrator. However, we do have the following.

\begin{proposition}
Kahan's method is integrable for the matrix Riccati differential equation (\ref{eq:mrde}) with $a=b=d=0$.
\end{proposition}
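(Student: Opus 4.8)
The plan is to solve the implicit Kahan equations for this problem in closed form and then to recognise the solution as a M\"obius integrator in the sense of Schiff and Shnider, from which integrability follows exactly as for the scalar Riccati equation above.

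With $a=b=d=0$ the equation \eqref{eq:mrde} is $\dot x=-x\,c(t)\,x$ with $x\in\R^{n\times m}$; as in the scalar case I would freeze $c=c(t)$ at the chosen evaluation point, so that each step is a Kahan step for the autonomous quadratic field $f(x)=-xcx$. Polarising $f$ gives the symmetric bilinear map $\tilde f(x,y)=-\tfrac12(xcy+ycx)$, so Kahan's method \eqref{eq:kahanrk} reads $\frac{x'-x}{h}=\tilde f(x,x')=-\tfrac12(xcx'+x'cx)$, i.e.
\begin{equation*}
\bigl(I+\tfrac{h}{2}xc\bigr)x' + x'\bigl(\tfrac{h}{2}cx\bigr)=x .
\end{equation*}
Rather than solve this Sylvester equation abstractly, I would read off a candidate from the Grassmannian picture: $\dot x=-xcx$ is the projection under $x=uv^{-1}$ of the linear system $\dot u=0$, $\dot v=cu$, whose time-$h$ flow (with $c$ frozen) sends $(u,v)\mapsto(u,\,v+hcu)$ and hence $x\mapsto x(I+hcx)^{-1}$.

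The key step is to check that
\begin{equation*}
x' := x(I+hcx)^{-1}=(I+hxc)^{-1}x
\end{equation*}
solves the Kahan equation. The two written forms agree by the push-through identity $x(I+hcx)=(I+hxc)x$. Moreover $cx$ commutes with $I+hcx$, hence with $(I+hcx)^{-1}$, so
\begin{equation*}
xcx' = x\,cx\,(I+hcx)^{-1} = x(I+hcx)^{-1}cx = x'cx ,
\end{equation*}
and therefore $x'+\tfrac{h}{2}(xcx'+x'cx)=(x+hxcx)(I+hcx)^{-1}=x(I+hcx)(I+hcx)^{-1}=x$, as required. Since this $x'$ is analytic in $h$ and reduces to $x$ at $h=0$, it is the Kahan iterate. (In fact this is the exact flow up to the quadrature error in freezing $c$; equivalently, Kahan's method here coincides with Euler's method applied to $\dot u=0$, $\dot v=cu$, exactly paralleling the scalar Riccati case.)

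It then remains only to note that $x\mapsto x(I+hcx)^{-1}$ is a M\"obius transformation of the Grassmannian of $m$-planes in $\R^{n+m}$, induced by the invertible matrix $\left(\matrix{I & 0 \cr hc & I}\right)$. M\"obius maps form a group, so the $N$-th iterate is again M\"obius --- explicitly $x_N=x_0(I+Nhc\,x_0)^{-1}$ when $c$ is constant --- and in particular the degrees of numerator and denominator stay bounded; by the degree-growth (algebraic-entropy) criterion the map is integrable, and indeed it is linearised by $x=uv^{-1}$. I do not anticipate a real obstacle: the one place that requires care is verifying that the symmetric combination $xcx'+x'cx$ appearing in Kahan's equation collapses to $2x\,cx\,(I+hcx)^{-1}$, which is precisely where the commutation of $cx$ with $I+hcx$ enters.
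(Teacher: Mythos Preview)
Your argument is correct and follows essentially the same route as the paper: both obtain the explicit Kahan iterate $x'=x(I+hcx)^{-1}$ and then linearise. The paper linearises via $y=x^{-1}$ (so $y'=y+hc$) and writes the closed-form solution $x_N=\bigl(x_0^{-1}+h\sum_{i=0}^{N-1}c(t_{i+1/2})\bigr)^{-1}$, whereas you use the homogeneous coordinates $x=uv^{-1}$; your version has the minor advantage of applying directly to rectangular $x$, and your algebraic verification of the Sylvester identity is more detailed than the paper's ``can be checked algebraically''.
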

\begin{proof}
The transformation $y = x^{-1}$, which reduces $\dot x = - x c x$ to $\dot y = c$, commutes with the Kahan discretization. That is, the solution of the Kahan mapping in this case, which can be checked algebraically to be
\begin{equation*} x' = x(I+h c x)^{-1}\end{equation*}
written in terms of $y$ becomes $y' = y + h c$. Thus the solution of the Kahan mapping is
\begin{equation*} x_n = \left(x_0^{-1} +  h\sum_{i=0}^{n-1} c(t_{i+1/2})\right)^{-1}.\end{equation*}
\end{proof}
In other words, the Kahan method in this case is equivalent to applying Euler's method to (\ref{eq:gr}) and is thus a M\"obius integrator.

\subsection{First Painlev\'e equation}

\label{subs:painleve}

Consider the Painlev\'e I equation $\ddot{x}=6x^2+t,$ which can be written as a nonautonomous Hamiltonian system, and reads
\begin{eqnarray*}
\dot{x}&=&y,\\
\dot{y}&=&6x^2+t,
\end{eqnarray*}
with Hamiltonian
\begin{equation*}H(t,x,y)=\frac{1}{2}y^2-2x^3-xt.\end{equation*}
The original format of the Kahan method as proposed by Kahan (see \cite{petrera12oio}), cannot be applied to non-autonomous systems. Adding the equation $\dot{t}=1$ to the system, we get the autonomous system
\begin{equation}
\label{PIsystem}
\begin{array}{lcl} 
\dot{x}&=&y,\\
\dot{y}&=&6x^2+t,\\
\dot{t}&=&1,
\end{array}
\end{equation}
to which we apply Kahan's method.
\begin{proposition}
Kahan's method applied to \eqref{PIsystem} yields the map
\begin{equation}
\label{disPI}
x_{n-1}+x_{n+1}=\frac{(2x_n+h^2t_n)}{(1-3h^2x_n)},
\end{equation}
which is an integrable discretization of the first Painlev\'e equation.
\end{proposition}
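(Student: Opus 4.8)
The plan is to establish two things: that Kahan's method applied to \eqref{PIsystem} reduces to the scalar recurrence \eqref{disPI}, and that \eqref{disPI} deserves to be called an integrable discretization of $\ddot x = 6x^2 + t$. For the first part I would write Kahan's method \eqref{eq:kahanrk} componentwise on \eqref{PIsystem}: the linear term $y$ in the $x$-equation symmetrizes to $\tfrac12(y_n+y_{n+1})$, the equation $\dot t = 1$ becomes $t_{n+1}=t_n+h$ (so $t_n$ is affine in $n$), and the quadratic $6x^2$ becomes the polarized term $6x_nx_{n+1}$, giving
\begin{equation*}
\frac{x_{n+1}-x_n}{h}=\frac{y_n+y_{n+1}}{2},\qquad
\frac{y_{n+1}-y_n}{h}=6x_nx_{n+1}+\frac{t_n+t_{n+1}}{2},\qquad
t_{n+1}=t_n+h.
\end{equation*}
Now eliminate $y$: the first equation gives $y_n+y_{n+1}=\tfrac2h(x_{n+1}-x_n)$, and subtracting its copy with $n$ replaced by $n-1$ yields $y_{n+1}-y_{n-1}=\tfrac2h(x_{n+1}-2x_n+x_{n-1})$. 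Adding the second equation at steps $n-1$ and $n$ gives a second expression for $y_{n+1}-y_{n-1}$, in which the combination $t_{n-1}+2t_n+t_{n+1}$ collapses to $4t_n$ precisely because $t_n$ is affine in $n$. Equating the two expressions and clearing $h$ gives $(x_{n+1}+x_{n-1})(1-3h^2x_n)=2x_n+h^2t_n$, which is \eqref{disPI}. That this is a consistent discretization is a one-line Taylor expansion: with $x_n=x(t_n)$ one has $x_{n+1}+x_{n-1}-2x_n=h^2\ddot x+O(h^4)$ while $\tfrac{2x_n+h^2t_n}{1-3h^2x_n}-2x_n=\tfrac{h^2(6x_n^2+t_n)}{1-3h^2x_n}=h^2(6x_n^2+t_n)+O(h^4)$, so \eqref{disPI} becomes $\ddot x=6x^2+t$ as $h\to0$.

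The substance is integrability, which for a rational non-autonomous map we take, as throughout the paper, to mean that the degrees of the numerator and denominator of the iterates grow polynomially rather than exponentially. I would establish this along the same lines as the other verifications here: regard $x_0,x_1$ as indeterminates and iterate \eqref{disPI} — each step being a rational map of bidegree $(1,1)$ in the two preceding iterates — and exhibit the (quadratic) growth of the reduced degrees, the point being the abundance of common-factor cancellations that is the hallmark of an integrable map; this is a finite symbolic computation. A more structural route, which also accounts for the transcendental nature of the dynamics, is to note that the translation $x_n=\tfrac1{3h^2}+v_n$ turns \eqref{disPI} into
\begin{equation*}
v_{n+1}+v_{n-1}=c+\frac{an+b}{v_n},
\end{equation*}
with constants $a,b,c$ depending only on $h$ and $t_0$; this is a discrete Painlev\'e~I equation, whose integrability — vanishing algebraic entropy, singularity confinement, and an associated isomonodromic linear problem — is classical. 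This also explains why, in contrast to the autonomous examples of Section~\ref{section2}, the solutions are discrete Painlev\'e transcendents rather than elliptic or hyperelliptic functions and the map carries no rational first integral.

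I expect the only real obstacle to be the degree-growth step: verifying that the generic exponential growth is reduced to polynomial growth means locating and cancelling every common factor — equivalently, resolving the indeterminacy points of \eqref{disPI} on a suitable rational surface so that the induced action on divisor classes is linear — whereas the Kahan reduction and the continuum limit are entirely routine.
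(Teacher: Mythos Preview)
Your argument is correct and follows essentially the same route as the paper: derive the scalar recurrence \eqref{disPI} from Kahan's method by eliminating $y$, then identify it with the known discrete Painlev\'e~I equation $u_{n+1}+u_{n-1}=(An+B)/u_n+C$ via the translation $x_n=\tfrac{1}{3h^2}+v_n$ (the paper writes $x_n=-u_n+\tfrac{1}{3h^2}$, i.e.\ $u_n=-v_n$, and cites \cite{brezin,grammaticos,bonan} for its integrability). Your componentwise derivation of \eqref{disPI} is more explicit than the paper's, and your added consistency check and remarks on algebraic entropy are extras, but the core proof is the same.
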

\begin{proof}
Applying Kahan's method to the quadratic vector field \eqref{PIsystem}
we obtain
the map 
\begin{eqnarray}
\label{KahanPI}
\left(\begin{array}{c}
\frac{x_{n+1}-x_n}{h}\\
\frac{y_{n+1}-y_n}{h}
\end{array}\right)&=
\left(I-\frac{h}{2}JH'' (x_n,y_n)\right)^{-1}J\nabla\, H(t_{n+\frac{1}{2}},x_n,y_n),\\[0.2cm]
t_n &= n\, h,
\end{eqnarray}
where $t_{n+\frac{1}{2}}=t_n+\frac{h}{2}$. After some simple algebra this map can be shown to be equivalent to \eqref{disPI}.

Now letting
\begin{equation*}x_n=-u_n+\frac{1}{3h^2}\end{equation*}
and substituting in in \eqref{disPI} we get
\begin{equation*}u_{n+1}+u_{n-1}=-\frac{(\frac{t_n}{3}+\frac{2}{9h^4})}{u_n}+\frac{4}{3h^2}.\end{equation*}
This is a special version of
\begin{equation}
\label{DPainI}
u_{n+1}+u_{n-1}=\frac{(An+B)}{u_n}+C
\end{equation}
with a rational relation between $A$, $B$ and $C$. This is an integrable discretization of Painlev\'e I  which appeared in \cite{brezin} and \cite{grammaticos}; see also \cite{bonan}.

\end{proof}

There are indications that there may exist other integrable mappings related to Painlev\'e equations via the Kahan discretization. First, consider the Painlev\'e I equation in the form $\ddot x = 6 x^2 + A t$. Differentiation with respect to $t$ gives  $x^{(3)} = 12 x \dot x + A$. The Kahan method  applied to the first-order form
\begin{equation*}\eqalign{
\dot x &= y\cr
\dot y &= z\cr
\dot z &= 12 x y + A\cr
}\end{equation*}
preserves Euclidean volume and passes the entropy test. Letting
\begin{equation*} I = z - 6 x^2 + \frac{3}{2}h^2 y^2,\end{equation*}
the Kahan map obeys
\begin{equation*} I_{n+1} = I_n + h A\end{equation*}
giving a time-dependent integral. The two-dimensional nonautonomous map obtained by eliminating $z$ using the integral appears to be different from  (\ref{DPainI}). The case $A=0$ is the Kahan discretization of the Ishii system, which was shown to be integrable in Prop. \ref{prop:ishii}. Hence, the case $A\ne 0$ may
correspond to another integrable discretization of Painlev\'e I.

Second, differentiating again gives $x^{(4)} = 12 x \ddot x + 12 \dot x^2$. The Kahan method applied to the first-order form
\begin{equation*}\eqalign{
\dot x &= y \cr
\dot y &= u\cr
\dot u &= v\cr
\dot v &= 12 x u + 12 y^2\cr
}\end{equation*}
has a polynomial integral $v - 12 x y + 3 h^2 u y$, 
preserves the polynomial measure $m(x) = 1 - 3 h^2 x + \frac{3}{3}h^4 u$, and
 passes the entropy test, and hence may also correspond to another integrable discretization of Painlev\'e I.

Third, all 6 Painlev\'e equations were written as nonautonomous  planar Hamiltonian ODEs by Okamoto \cite{okamoto}. 
In each case, the Hamiltonian is a polynomial, and for Painlev\'e I, II, and IV it is cubic.
When written as 3-dimensional nonautonomous systems as in (\ref{PIsystem}), no new cases of Painlev\'e II or IV were found in which the Kahan discretization passed the entropy test. However, 
the Hamiltonian system
\begin{equation*}\eqalign{
\dot x &= 4 x y - (x^2+ 2 c t x + 2\theta_0)\cr
\dot y &= -2 y^2 + 2 x y + 2 c t y - \theta_\infty \cr
}\end{equation*}
is Okamoto's form of Painlev\'e IV when $c=1$. The mapping obtained by freezing $t$ at $t_{n+1/2}$, and then applying the Kahan discretization on $[t_n,t_{n+1}]$---which is different from that obtained from the autonomizing version used above---was found to pass the entropy test for integrability in the cases (i) 
 $c=2$, $\theta_0$ arbitrary, $\theta_\infty=0$
and (ii) $c=-2$, $\theta_0=0$, $\theta_\infty$ arbitrary.
Thus, the Kahan method may generate integrable maps in these cases.


%
%

\section{Conclusion}

In this paper we have extended and generalized the impressive list of
cases presented in \cite{petrera12oio}, for which Kahan's method preserves integrability.
In particular, we have expanded the list beyond maps that are integrable
in terms of elliptic or hyperelliptic functions, to include our final
example, whose discretization is an integrable Painlev\'e equation. It still remains to actually integrate the Kahan map in many of these integrable cases as done here in section~\ref{sec:2.1} for the cubic planar Hamiltonian case.
In a
future paper, we hope to generalize our work on quadratic differential
equations, in the current paper and in \cite{CMOQgeometricKahan}, to the case of cubic and
higher-order polynomial differential equations.
Echoing the sentiment expressed in \cite{petrera12oio}, it is our hope that the present
work may help the Kahan--Hirota--Kimura discretization to attract the
attention of experts in integrable systems and in algebraic geometry, as
well as in geometric numerical integration.

\section*{Acknowledgements}
We are grateful to Chris Ormerod for very useful
correspondence regarding Subsection 3.1 and the related literature. This research was supported by a Marie Curie International Research Staff Exchange Scheme Fellowship within the 7th European Community Framework Programme, by the Marsden
Fund of the Royal Society of New Zealand, and by the Australian Research Council. 
Part of the work of G.R.W. Quispel was done while serving as Lars Onsager Professor at Norwegian University of Science and Technology, Trondheim, Norway.

\section*{References}

\end{document}